\documentclass[journal]{IEEEtran}

\ifCLASSINFOpdf
\else
   \usepackage[dvips]{graphicx}
\fi
\usepackage{url}

\hyphenation{op-tical net-works semi-conduc-tor}

\usepackage{graphicx, tikz}
\usetikzlibrary{arrows.meta}
\usepackage{amssymb}
\usepackage{xcolor}
\usepackage{mathtools}
\usepackage{hyperref}
\usepackage{csquotes}
\usepackage{amsmath}
\usepackage{amsthm}
\usepackage{utfsym, adforn}
\usepackage{wrapfig}
\usepackage[T1]{fontenc}
\usepackage[utf8]{inputenc}
\usepackage{algorithm}
\usepackage{algpseudocode}
\usepackage{graphicx}
\usepackage{subcaption}
\usepackage{booktabs, stfloats, tabularx}
\usepackage{soul, cancel}

\DeclareMathOperator*{\sinc}{sinc}

\newcommand{\x}{\mathbf{x}}
\newcommand{\y}{\mathbf{y}}

\renewcommand{\Return}{\textbf{Return }}

\newtheorem{proposition}{Proposition}[section]

\theoremstyle{remark}
\newtheorem{remark}{Remark}[section]

\begin{document}
\pagestyle{empty}

\title{Autoregressive Stochastic Clock Jitter Compensation in Analog-to-Digital Converters}

\author{Daniele Gerosa, Rui Hou, Vimar Björk, Ulf Gustavsson, and Thomas Eriksson
\thanks{D. G. is a post-doctoral researcher at the Communication, Antennas and Optical Networks, Electrical Engineering dept., Chalmers University of Technology, Göteborg, Sweden (e-mail: daniele.gerosa@chalmers.se).}
\thanks{R. H. is a RF Technology Expert at Ericsson AB, Stockholm, Sweden.}
\thanks{V. B. is a Radio Architecture Expert at Ericsson AB, Göteborg, Sweden.}
\thanks{U. G. is a Senior Specialist at Ericsson AB, Göteborg, Sweden as well as Guest Researcher at the Communication, Antennas and Optical Networks, Electrical Engineering dept., Chalmers University of Technology, Göteborg, Sweden.}
\thanks{T. E. is full professor at the Communication, Antennas and Optical Networks, Electrical Engineering dept., Chalmers University of Technology, Göteborg, Sweden (e-mail: thomase@chalmers.se).}}

\markboth{}
{Shell \MakeLowercase{\textit{et al.}}: Bare Demo of IEEEtran.cls for IEEE Journals}
\maketitle
\thispagestyle{empty}

\begin{abstract}
This paper addresses the mathematical modeling and compensation of stochastic discrete-time clock jitter in analog-to-digital converters (ADCs). We model the stochastic clock jitter as a first-order autoregressive (AR(1)) process, and we propose two novel, computationally efficient, pilot-assisted dejittering algorithms for baseband signals: one based on solving a sequence of weighted least-squares problems, and another that exploits the correlated jitter structure via a Kalman filter-based routine. We also propose a conditional maximum-likelihood estimator for the autoregressive parameters, enabling near-optimal Kalman-filter performance even when such parameters vary over time. We further provide a mathematical analysis of the induced linearization errors, and we complement the theory with synthetic simulations to evaluate the proposed techniques across different scenarios.

The proposed techniques are shown to yield a \(1-15\)~dB improvement in signal-to-noise-and-distortion ratio (SINADR) and \(0.02-1.6\)~dB in symbol error vector magnitude (EVM), depending on impairment severity and pilot density. The Kalman smoother generally provides superior performance by leveraging additional temporal information.

\end{abstract}

\begin{IEEEkeywords}
Analog-to-Digital Converters, stochastic jitter, autoregressive process, weighted least-squares, Kalman smoother, linearization. 
\end{IEEEkeywords}

\IEEEpeerreviewmaketitle
\section{Introduction}
Analog-to-digital converters are hardware devices that convert analog signals into digital signals. They are never ideal, and they typically introduce various imperfections into the sampled signals. These imperfections arise from non-ideal hardware and may appear as direct-current (DC) voltage offset bias, quantization noise, aliasing, nonlinear distortion due to clipping, and irregularities in the sampling instants (jitter) \cite{Valkama_Springer_Hueber}. This paper focuses on the latter imperfection, which arises from random phase fluctuations in the sampling clock generated by the clock source and its conditioning/distribution circuitry.

From a theoretical perspective, sampling instants that deviate from the ideal multiples of the sampling interval do not necessarily preclude the reconstruction of bandlimited analog signals, provided that the average sampling rate is large enough. This is established by various extensions of the classical interpolation results by Shannon, Nyquist, Whittaker and Kotelnikov \cite{Marvasti}. The caveat is that the timing deviations must be known. In practical applications, these timing deviations are not available a priori and must therefore be inferred from measurements. 
Considerable effort has been devoted to the modeling, estimation and compensation of sampling-clock jitter in ADCs. Several works use out-of-band pilot (or training) tones injection, as in \cite{Syrjala_Valkama}\cite{Rutten_Breems_vanVeldhoven}\cite{Towfic_Ting_Sayed}\cite{GerosaAnttilaEriksson}, where the baseband payload is typically filtered to isolate the pilot tone. The jitter realizations are estimated from this tone and then used to compensate (i.e., dejitter) the payload. A time-domain technique that time-multiplexes pilot samples to unknown data samples is described in \cite{Sung_Choi} in the context of time-interleaved (ti) ADCs. Other statistical signal processing methods proposed in the literature aim to directly estimate the payload without first necessarily estimating the jitter realizations. Minimum Mean Squared Error (MMSE) estimators are proposed in \cite{Weller_Goyal}\cite{Nordio_Chiasserini_Viterbo}\cite{Testoni_Speciale_Ridolfi_Pouzat}\cite{IbarraManzanoetal} under different payload models.

In parallel with algorithmic methods, the literature has also devoted substantial attention to how sampling clock jitter should be modeled statistically. While static (i.e., constant in time) jitter is mostly relevant in the field of ti ADCs \cite{Divi_Wornell}, stochastic jitter is often modeled as white noise \cite{Weller_Goyal}\cite{Nordio_Chiasserini_Viterbo}\cite{Testoni_Speciale_Ridolfi_Pouzat}\cite{IbarraManzanoetal}\cite{Maetal}\cite{Araghi_Akhaee_Amini} or as colored process with a pronounced \(1 / f^2 \) spectral decay component \cite{Towfic_Ting_Sayed}, depending on whether the jitter is caused by an internal or an external clock reference \cite{Zanchi_Samori}. Clocks with external references can also add an additional low-offset \(1/f^{3}\) region in their spectrum, which is often neglected for simplicity (e.g., by retaining only the \(1/f^{2}\) component) \cite{Zanchi_Samori}\cite{DaDalt_Harteneck_Sandner_Wiesbauer}. These spectral features are commonly observed in oscillators \cite{Zanchi_Samori}. 

To the best of our knowledge, autoregressive (AR) processes were first employed in \cite{Chang_Lin_Wang_Lee_Shih} to model sampling jitter in sigma-delta modulators. Wiener-process models are also common in phase-noise studies \cite{Gävert_Coldrey_Eriksson}\cite{Petrovic_Rave_Fettweis}. In this work, we model sampling-clock jitter as a first-order autoregressive process (AR(1)), which is mathematically tractable and can reproduce the relevant spectral characteristics with an appropriate choice of parameters. More details are provided in Section~\ref{sec:sys_model}.

We exclude effects such as nonlinearities, phase noise, and imperfect bandlimiting. Such effects will also degrade the signal quality. However, most of them are relatively orthogonal to the jitter, and can be treated separately using independent techniques\footnote{As an example, this is the case in many Ericsson products.}. In the present work, we assume that the residual distortion from these other impairments can be modeled as additive and incorporated into the thermal noise term (see e.g. \cite{Moghaddametal}).

The main theoretical contributions of this paper are summarized as follows:
\begin{itemize}[]
\item[\adfhalfrightarrowhead] Following \cite{Chang_Lin_Wang_Lee_Shih}, we model the stochastic clock jitter as an autoregressive process of order \(1\) with model parameter close to \(1\).
\item[\adfhalfrightarrowhead] We propose and compare two novel jitter tracking and compensation algorithms based on pilot samples. The first uses Kalman filtering and smoothing to exploit the jitter statistical structure. The second uses an optimally weighted least-squares approach and is more ``model-agnostic'' (Section \ref{sec:main_sec}). Depending on the scenario, both techniques are shown to yield a \(1-15\) dB improvement in Signal-to-Noise-and-Distortion Ratio (SINADR) and \(0.02-1.6\)~dB in symbol error vector magnitude (EVM) over the uncompensated case, assuming sufficiently high symbol or sample pilot density.
\item[\adfhalfrightarrowhead] We propose a Maximum Likelihood Estimation technique for autoregressive parameter estimation within the Kalman smoother routine.
\end{itemize}
Furthermore, the analysis also includes:
\begin{itemize}[]
\item[\adfhalfrightarrowhead] rigorous justifications for the approximations arising in Taylor expansions and an outline of the regime in which these approximations are valid (Section \ref{subsec:approx_err});
\item[\adfhalfrightarrowhead] synthetic simulations to evaluate the techniques presented (Section \ref{sec:num_sim}) and to compare them with existing methods.
\end{itemize}
\subsection{Notation and symbols}
ADCs act on continuous-time signals by discretizing and digitizing them, so that the output of an ADC is a discrete sequence of values \( \{x(t_n)\}_{n \in \mathbb{N}} \) when the input is the analog signal \( x(t) \). If \(x(t) \) is a stochastic process, we can model the ADC output as a sequence of random variables. We will denote sequences (finite or infinite) by bold lowercase letters \( \mathbf{x} \coloneqq (x_1, x_2, \dots ) \). If \( \x\) and \( \y \) are two such sequences, \( \x \odot \y \) indicates their Hadamard componentwise product. Bold uppercase letters \( \mathbf{A}, \mathbf{B}, \dots \) will represent matrices; the symbol ``\( \approx \)'' stands for ``approximately equal to'' while the symbol ``\( \ll \)'' means ``much less than''. We will use \( \widehat{\cdot} \) for estimates of quantities, and \( \widetilde{\cdot} \) for noisy measurements of quantities.

If \( X  \) is a complex-valued random variable defined on the probability space \( ( \Omega, \Sigma, P) \), then \( \mathbb{E}[X] \) denotes its expected value, \( \mathbb{E}[X] = \int_\Omega \mathfrak{Re}[X] \, d P + i \int_\Omega \mathfrak{Im}[X] \, d P \) and \( \text{var}(X) \) denotes its variance, \( \sigma_X ^2 = \text{var}(X) = \mathbb{E}[|X - \mathbb{E}[X]|^2]   \). 

For a set \(E\), \( \chi_E \) indicates its characteristic function and \( |E| \) its cardinality.

\subsubsection*{Derivatives}
By \(x' _n = x' (n T_s)  \) we mean the first-order time derivative of the realization of the process \( x(t) \) evaluated at times \( n T_s \); for a bandlimited Gaussian process this operation is well-defined \cite{Belyaev} and is equivalent to the \(n\)-th sample of the discrete derivative operator \(D\) applied to the (infinite) sequence of samples, i.e. \[ \frac{d}{d t} x(t) \big|_{t = n T_s} = x' (n T_s)= (D \mathbf{x})_n. \] This derivative operator \( D\) acts as time domain convolution \[ (D \x)_n = \frac{1}{T_s} \sum_k h_k x_{n-k}  \] with the ideal non-causal bandlimited derivative filter given by \( h_k = (-1)^k / k \) for \( k \ne 0 \) and \( h_0 = 0 \); its frequency response is \( H_D (\Omega) = i \Omega \). These expressions are commonly found and used in the relevant literature, cf. \cite{Salib_Flanagan_Cardiff}\cite{Divi_Wornell}\cite{Oppenheim_Schafer}.

Oftentimes throughout the text, we will encounter continuous-time processes \( y \) (Wiener, Ornstein-Uhlenbeck, white noise) whose derivative may not exist in the standard pointwise sense. However, upon bandlimiting and sampling, their discrete \( D \y \) counterparts are well-defined. To keep the notation as light as possible, we will write, with a slight abuse of notation, \( y'_n \) in lieu of \( (D \y)_n \).

\section{Signal and system model}
\label{sec:sys_model}
We consider baseband analog signals that are modeled as continuous-time bandlimited complex-valued stationary Gaussian processes \( x(t, \omega) \) so that, for each time instant \( t \in T \subseteq \mathbb{R} \), the random variable \( (\omega \mapsto x(t, \omega)) \) is Gaussian with \(0\) mean and variance \(\sigma_x ^2\). We also assume that \(x\) has flat power spectral density \( \mathcal{S}_x (f) = \sigma_x ^2 / (2W) \chi_{ \{|f| \le W \}} (f)  \). This standard model, commonly used to describe signals in communication applications (cf. \cite{Proakis}), possesses analytically well-behaved sample paths: by Theorem 11 in \cite{Belyaev} the maps \( t \mapsto x(\omega, t) \) are indeed holomorphic for almost every \( \omega \in \Omega \). Therefore the ``n-th derivatives objects'' \( x^{(n)} \) are well-defined for all \(n \in \mathbb{N} \).

\subsection{Stochastic jitter description and problem statement}
An ideal ADC samples signals at uniformly spaced instants. This means that \(t_{n+1} - t_n \) is constant for all \(n\ge 0 \). In real-world applications, however, imperfections in the local oscillator (LO) circuitry introduce jitter, causing the actual sampling instants to deviate from the ideal in a non-uniform way: i.e., the actual sampling takes places at times \( \tilde{t}_n = n T_s + \xi_n \) where \( T_s \) is the ADC sampling interval and \( \xi_n \) is the time-dependent (stochastic) jitter. The literature presents different models for the (discrete) process \( \{ \xi_n \}_{n \ge 1} \): in \cite{Nordio_Chiasserini_Viterbo} \( \xi_n  \sim \mathcal{U}([-a,a]) \) and independent while in \cite{Weller_Goyal}\cite{Testoni_Speciale_Ridolfi_Pouzat}\cite{Araghi_Akhaee_Amini} the \( \xi_n   \) are modeled as i.i.d. Gaussian with \(0\) mean and variance \( \sigma^2 \). \cite{Towfic_Ting_Sayed} describes \( \xi_n \) as a ``slowly varying'' Gaussian process. To the best of our knowledge, only \cite{Chang_Lin_Wang_Lee_Shih} models the time domain clock jitter as an autoregressive process, while the Wiener process is more broadly used e.g. in \cite{Petrovic_Rave_Fettweis}\cite{Gävert_Coldrey_Eriksson}\cite{Khanzadi_Krishnan_Kuylenstierna_Eriksson}. Several sources characterize the jitter process in the frequency domain by deriving formulas for its power spectral density. These expressions turn out to coincide with or be very close to the power spectral densities of AR(1) processes; compare, for example, \cite[eq.9]{Chang_Lin_Wang_Lee_Shih} with \eqref{AR1discpsd}, and \cite[eq.4]{DaDalt_Harteneck_Sandner_Wiesbauer} with \eqref{eq:AR1contpsd}.

Therefore, in the present work we model the jitter process as a discrete autoregressive process of order \(1\) AR(1) with evolution relation given by\begin{equation} \label{eq:AR_ev} \xi_n = \varphi \xi_{n-1} + \epsilon_n, \end{equation} where \( \varphi \approx 1 \) yet \( \varphi < 1\), \( \epsilon_n \sim \mathcal{N}(0, \sigma_\epsilon ^2) \) and \( \xi_0 \sim \mathcal{N}(0, \sigma_\epsilon ^2 / (1 - \varphi^2)) \) (stationary initialization). It can be shown \cite{Kay} that the power spectral density of this process is \begin{equation} \label{AR1discpsd} \mathcal{S}_\xi (e^{i \omega}) =   \frac{\sigma_\epsilon ^2}{1 + \varphi^2 - 2 \varphi \cos(\omega)}, \ \omega \in [-\pi,\pi],  \end{equation} from which it is straightforward to observe that much of the process power is concentrated near \( \omega = 0 \) (or \(0\) Hz): indeed \begin{equation} \label{eq:AR1approxpsd}  \mathcal{S}_\xi (e^{i \omega})  = \frac{ \sigma_\epsilon ^2}{ (1 - \varphi)^2 + 2 \varphi (1 - \cos(\omega)) }  \approx \frac{ \sigma_\epsilon ^2}{ \omega^2 } \end{equation} where the last approximation in \eqref{eq:AR1approxpsd} holds as long as \( \varphi \approx 1 \), \( \omega \approx 0 \) but \( |\omega| > |1-\varphi|\). Moreover notice that \(\mathcal{S}_\xi (e^{i 0}) = \sigma_\epsilon ^2 / (1 - \varphi)^2 \). These observations show that the discrete process \eqref{eq:AR_ev} can accurately model the jitter spectral decay as experimentally measured \cite{Zanchi_Samori}.
In addition, we remark that the Ornstein-Uhlenbeck process, which can be seen as the continuous counterpart of \eqref{eq:AR_ev}, has the Lorentzian profile \begin{equation} \label{eq:AR1contpsd}
  \mathcal{S}_{\text{OU}} (\omega) =  \frac{\sigma^2}{\gamma^2 + \omega^2 }
\end{equation} as spectral density, from which it is again visible the \(1/\omega^2\) decay, now for all \(\omega\) ``large enough''.

If the jitter is not too large, with respect to the sampling interval, its effect on the discretized signal is often \cite{Towfic_Ting_Sayed}\cite{Testoni_Speciale_Ridolfi_Pouzat}\cite{Nordio_Chiasserini_Viterbo}\cite{elbornsson} approximated via a first-order Taylor expansion: \begin{equation} \label{jitter_TaylorExp}  x(n T_s + \xi_n)  \approx x(n T_s ) + \xi_n x' (n T_s). \end{equation}
As we will show in Proposition \ref{sec_order_varEst}, the higher order terms in the Taylor expansion can be considered in some sense negligible. 

From here on, we will also set \begin{equation} \label{jitter_TaylorExp_and_noise} y_n \coloneqq x(n T_s ) + \xi_n x' (n T_s) + w_n  \end{equation}as our model under investigation, with \( w_n \) a sequence of independent zero-mean Gaussian random variables with variance \( \sigma_w^2 \) modeling additive white noise in measurements. It has been observed \cite{Towfic_Ting_Sayed} that \( \xi_n\) may be weakly correlated with the \(w_n\), but in this work, we will consider the two processes as independent.

The problem we address in this work is to estimate the process \( \xi_n \) given the measurements \( y_n \ \forall n \) together with pilot samples \( x_n \) for \( n \in I \subseteq \mathbb{N} \) and, as a byproduct, to derive a dejittered estimation of \( x_n \).

\subsection{How much jitter is ``too much'' jitter?}

One way to quantify how much jitter can be tolerated by the system is by setting a target SINADR level. Since the disturbance is modeled as additive (cf. \eqref{jitter_TaylorExp}), we can write the SINADR for our signal and jitter model (excluding quantization noise) \begin{equation} \label{SINADR} \begin{split} \text{SINADR}_{\text{dB}} & = 10 \log_{10} \left[ \frac{\text{var}(x_n)}{\text{var}(w_n) + \text{var}(\xi_n x' _n) } \right] \\ & = 10 \log_{10} \left[ \frac{\text{var}(x_n)}{\text{var}(w_n) + \text{var}(\xi_n) \text{var}( x' _n) } \right] \\ &  = 10 \log_{10} \left[ \frac{3(1 - \varphi^2) \sigma_x ^2 }{3(1 - \varphi^2) \sigma_w ^2 + 4 \pi^2 W^2  \sigma_\epsilon ^2 \sigma_x ^2} \right],  \end{split}  \end{equation} where we used \( \sigma_\xi ^2 = \sigma_\epsilon ^2 / (1 - \varphi^2) \), \( \mathbb{E}[\xi_n] = \mathbb{E}[x' _n] = 0 \) and the fact that \( \xi_n \) and \(x_n '\) are independent. By solving eq. \eqref{SINADR} for \( \sigma_\epsilon ^2 \) it is easy to see how much jitter could be tolerated (for a desired SINADR level). At the same time, for the first-order Taylor expansion \eqref{jitter_TaylorExp} to hold, the jitter \( \xi_n \) cannot be too large; therefore one reasonable assumption is to make the ``small jitter'' hypothesis (SJH), in line with the literature surveyed so far, \[\label{small_jitter_hyp} \sigma_\xi / T_s \ll 1.  \tag{SJH}\] We will refer to the left-hand side of the latter as \emph{jitter percentage}. Notice that when \( \sigma_w = 0 \), \eqref{SINADR} reduces to \[ -20 \log_{10} \left[ 2 \pi  W \sigma_\xi  / \sqrt{3}   \right] \approx -20 \log_{10} \left[ 2 \pi  W \sigma_\xi   \right] + 4.77 \text{ dB}, \] which is the standard jitter SNR formula found in classical literature (see e.g. \cite{Shinagawa_Akazawa_Wakimoto}), applied to bandlimited signals.

The interaction between the two additive distortions in \eqref{jitter_TaylorExp_and_noise} will also play a role in our analysis. For convenience we introduce a Noise-to-Distortion Ratio (NDR), defined as the ratio between the measurement white noise power and the jitter-induced distortion power. Although NDR is not a standard performance metric in the ADC literature, it provides a concise, dimensionless parameter that captures the relative strength of the two additive impairments in our model and helps organize the numerical results. In particular, we set
\begin{equation} \begin{split}
\mathrm{NDR}_{\mathrm{dB}} & \coloneqq 10\log_{10}\!\left(\frac{\mathrm{var}(w_n)}{\mathrm{var}(\xi_n x'_n)}\right) \\ &
= 10 \log_{10} \left[ \frac{3 \sigma_w ^2 (1 - \varphi^2) }{4 \pi^2 W^2 \sigma_x ^2 \sigma_\epsilon^2}  \right]. \end{split}
\end{equation}
 Since the observation model \eqref{jitter_TaylorExp_and_noise} contains two additive impairments, their relative strength affects the behavior of the proposed estimators. For this reason, we will report and interpret our experimental results also as a function of \(\mathrm{NDR}_{\mathrm{dB}}\). Note that when \( \text{NDR}_{\text{dB}} = 0 \), the two additive terms in \eqref{jitter_TaylorExp_and_noise} have the same variance.

\subsection{Approximation errors analysis} \label{subsec:approx_err}
There are two main approximations used in our analysis, whose errors need to be rigorously analyzed. In the first-order expansion in \eqref{jitter_TaylorExp} under \eqref{small_jitter_hyp} the neglected higher-order terms are, in some sense, negligible, as the following Proposition clarifies: 

\begin{proposition} \label{sec_order_varEst} Let \( \x = \{x_n \}_{n \in \mathbb{N}} \) be a discrete zero-mean, wide-sense stationary (WSS) bandlimited Gaussian process, and \( \boldsymbol{\xi} = \{  \xi_n \}_{n \in \mathbb{N}} \) a discrete autoregressive process of order \(1\) with variance \( \sigma_{\xi_n}^2 \equiv \sigma_{\xi}^2 \), independent of \(\x\). Assume that \eqref{small_jitter_hyp} holds; then we have \begin{equation} \label{small_jitt}
\emph{var}( \xi_n ^2 x'' _n) \ll \emph{var}( \xi_n  x' _n) \quad \forall \, n. 
\end{equation}
\end{proposition}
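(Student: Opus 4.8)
The plan is to compute both variances in closed form and show that their ratio scales as $(\sigma_\xi/T_s)^2$, which \eqref{small_jitter_hyp} forces to be small. The first observation is that, since $\boldsymbol{\xi}$ is independent of $\x$ and every process in sight is zero-mean, both variances factorize. Indeed $\mathbb{E}[\xi_n x_n'] = \mathbb{E}[\xi_n]\mathbb{E}[x_n'] = 0$ and $\mathbb{E}[\xi_n^2 x_n''] = \sigma_\xi^2\,\mathbb{E}[x_n''] = 0$ (recall that $x$, hence each $x^{(k)}$, is zero-mean), so that the means drop out and
\begin{equation*}
\text{var}(\xi_n x_n') = \mathbb{E}[\xi_n^2]\,\mathbb{E}[|x_n'|^2] = \sigma_\xi^2\,\text{var}(x_n'), \qquad \text{var}(\xi_n^2 x_n'') = \mathbb{E}[\xi_n^4]\,\text{var}(x_n'').
\end{equation*}

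Next I would fix the fourth moment of the jitter. Since the recursion \eqref{eq:AR_ev} is driven by Gaussian innovations and initialized in its stationary Gaussian law, each marginal $\xi_n$ is $\mathcal{N}(0,\sigma_\xi^2)$, so $\mathbb{E}[\xi_n^4] = 3\sigma_\xi^4$; even were exact Gaussianity dropped, any bound of the form $\mathbb{E}[\xi_n^4] = O(\sigma_\xi^4)$ would suffice for the conclusion. The target ratio then reduces to
\begin{equation*}
\frac{\text{var}(\xi_n^2 x_n'')}{\text{var}(\xi_n x_n')} = 3\sigma_\xi^2\,\frac{\text{var}(x_n'')}{\text{var}(x_n')}.
\end{equation*}
The two remaining spectral moments follow from the flat spectrum of $x$: differentiating $k$ times multiplies $\mathcal{S}_x$ by $(2\pi f)^{2k}$, and integrating over $[-W,W]$ gives $\text{var}(x_n') = \tfrac{4}{3}\pi^2 W^2\sigma_x^2$ (this is exactly the spectral second moment that already underlies \eqref{SINADR}) and $\text{var}(x_n'') = \tfrac{16}{5}\pi^4 W^4\sigma_x^2$. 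Hence $\text{var}(x_n'')/\text{var}(x_n') = \tfrac{12}{5}\pi^2 W^2$ and the ratio equals $\tfrac{36}{5}\pi^2 W^2\sigma_\xi^2$.

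The final, and only genuinely delicate, step is to bring in \eqref{small_jitter_hyp}, which constrains $\sigma_\xi/T_s$ rather than the bandwidth--jitter product $W\sigma_\xi$ that has just emerged. To bridge the two I would invoke the Nyquist constraint $WT_s \le \tfrac12$ (faithful representation of a $W$-bandlimited signal requires sampling at least that fast), yielding
\begin{equation*}
\frac{\text{var}(\xi_n^2 x_n'')}{\text{var}(\xi_n x_n')} = \frac{36}{5}\pi^2 (WT_s)^2\,\frac{\sigma_\xi^2}{T_s^2} \le \frac{9\pi^2}{5}\,\frac{\sigma_\xi^2}{T_s^2}.
\end{equation*}
Under \eqref{small_jitter_hyp} the right-hand side is $\ll 1$, which is precisely \eqref{small_jitt}. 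The main obstacle is therefore conceptual rather than computational: the genuine controlling quantity is the dimensionless group $(W\sigma_\xi)^2$, and its translation into the stated hypothesis carries the $O(1)$ factor $9\pi^2/5 \approx 17.8$. I would accordingly state explicitly that \eqref{small_jitter_hyp} must be read in the strong $\ll$ sense (equivalently $W\sigma_\xi \ll 1$) for \eqref{small_jitt} to hold with a comfortable margin, thereby delineating the regime of validity claimed in the statement.
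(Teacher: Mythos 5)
Your proof is correct and takes essentially the same route as the paper's: factorize both variances via the independence of \( \boldsymbol{\xi} \) and \( \x \), evaluate the spectral moments \( \text{var}(x'_n) \) and \( \text{var}(x''_n) \) from the flat spectrum, and conclude that the ratio \( \tfrac{36}{5}\pi^2 W^2 \sigma_\xi^2 \ll 1 \) using the Nyquist constraint \( W T_s \le \tfrac{1}{2} \) together with \eqref{small_jitter_hyp}. Your explicit use of the Gaussian fourth moment \( \mathbb{E}[\xi_n^4] = 3\sigma_\xi^4 \) (with the remark that \( \mathbb{E}[\xi_n^4] = O(\sigma_\xi^4) \) would suffice) is merely a more transparent rendering of the paper's variance-of-product computation, which implicitly uses \( \text{var}(\xi_n^2) = 2\sigma_\xi^4 \), and your closing caveat about the \( O(1) \) constant \( 9\pi^2/5 \) matches the regime the paper intends.
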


The second source of error comes from the derivative of \(y\) in \eqref{jitter_TaylorExp_and_noise}, that will replace the unknown \( x'_n \) in our algorithms: 

\begin{equation}
y_n ' = x_n ' + u'_n + w'_n,
\end{equation}
 where we set \(\mathbf{u} = \boldsymbol{\xi} \odot D \x \).  The term \( w'_n \) is violet noise bandlimited to \( 1 / 2 T_s \), whose variance equals \( \pi^2 \sigma_w ^2 / 3 T_s ^2  \). For a fixed NDR level \(r\) in dB, the white noise variance satisfies \( \sigma_w ^2 = 10^{r/10} \text{var}(\xi x')  \). Therefore \begin{equation} \label{violet_noise_varEst}  \sigma_{w'} ^2 = \frac{\pi^2 10^{r/10}\sigma_\xi ^2  }{3 T_s ^2} \sigma_{x'}^2  \approx 3.29 \frac{\sigma_\xi ^2 }{T_s ^2} 10^{r/10} \sigma_{x'}^2     \end{equation}and thus \( \sigma_{w'} ^2 \ll \sigma_{x'}^2 \) provided that \eqref{small_jitter_hyp} holds and the white noise is not too strong.
 The term \( D (\boldsymbol{\xi} \odot \mathbf{x}')_n \) has a more convoluted structure, but its impact can be explicitly assessed, as shown in Proposition \ref{yprim_approx_xprim}, the proof of which can be found in the appendix. The power relation between \(u'_n\) and \( x'_n \) can be evaluated explicitly by means of equation \eqref{var_derEst}. Figure \ref{fig:rat_cond} illustrates this ratio for a few values of \( \varphi \), \( \sigma_\xi / T_s \) and \( WT_s \).

\begin{proposition} \label{yprim_approx_xprim}
    Let \( \x = \{x_n\}_{n \in \mathbb{N}}\) be a discrete zero-mean, wide-sense stationary bandlimited Gaussian process (sampled every \(T_s\) seconds), with derivative \( \x' = \{x_n '\}_{n \in \mathbb{N}}\), and \( \boldsymbol{\xi} = \{\xi_n\}_{n \in \mathbb{N}} \) a discrete autoregressive process of order \(1\) with parameter \( \varphi \in (0, 1) \) and variance \( \sigma_{\xi_n}^2 \equiv \sigma_{\xi}^2 \), independent of \( \x\). Let \( D\) denote the ideal time-domain derivative operator. Then 
\begin{equation}
    \label{var_derEst} \emph{var}( D (\boldsymbol{\xi} \odot \x ')_n) = \frac{\sigma_{\xi}^2}{T_s ^2} \frac{3 (1 - \varphi^2)}{32 \pi^4 (W T_s)^3 } I \sigma_{x'}^2
\end{equation}
where \[ I = \int_{-\pi}^{\pi} \int_{-\pi}^{\pi} \frac{ \omega^2 ((\omega - \nu)_{2 \pi})^2}{1 - 2 \varphi \cos(\nu) + \varphi^2} \chi_{\{|(\omega - \nu)_{2 \pi}| \le 2 \pi W T_s\} } \, d \omega \, d \nu \] is a Poisson kernel-type double integral.  
\end{proposition}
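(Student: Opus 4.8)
The plan is to compute the variance through the power spectral density (PSD) of the filtered sequence, since $D$ acts as a linear time-invariant filter. Write $\mathbf{u}=\boldsymbol{\xi}\odot\mathbf{x}'$, so that the quantity of interest is $\mathrm{var}((D\mathbf{u})_n)$. First I would verify that $\mathbf{u}$ is a zero-mean WSS process: because $\boldsymbol{\xi}$ is real, stationary and independent of the (zero-mean, WSS) process $\mathbf{x}'$, one has $\mathbb{E}[u_n]=\mathbb{E}[\xi_n]\,\mathbb{E}[x_n']=0$ and the autocorrelation factorizes as $R_u(m)=\mathbb{E}[\xi_n\xi_{n+m}]\,\mathbb{E}[x_n'\,\overline{x_{n+m}'}]=R_\xi(m)\,R_{x'}(m)$. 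Since $D$ has squared magnitude response $|H_D(\omega)|^2=\omega^2/T_s^2$ on $(-\pi,\pi)$, the spectral formula for the variance of a filtered WSS process gives
\[ \mathrm{var}((D\mathbf{u})_n)=\frac{1}{2\pi}\int_{-\pi}^{\pi}\frac{\omega^2}{T_s^2}\,\mathcal{S}_u(e^{i\omega})\,d\omega. \]

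Next I would obtain $\mathcal{S}_u$ from the factorized autocorrelation. Multiplication in the time domain corresponds to a circular convolution of the spectra, so
\[ \mathcal{S}_u(e^{i\omega})=\frac{1}{2\pi}\int_{-\pi}^{\pi}\mathcal{S}_\xi(e^{i\nu})\,\mathcal{S}_{x'}\!\big(e^{i(\omega-\nu)_{2\pi}}\big)\,d\nu, \]
where the modular reduction $(\omega-\nu)_{2\pi}$ is exactly what records the $2\pi$-periodicity of the spectra and is the origin of that factor in $I$. Here $\mathcal{S}_\xi$ is the AR(1) PSD \eqref{AR1discpsd}, which after writing $\sigma_\epsilon^2=\sigma_\xi^2(1-\varphi^2)$ becomes $\sigma_\xi^2$ times the Poisson kernel $\tfrac{1-\varphi^2}{1-2\varphi\cos\nu+\varphi^2}$ --- hence the ``Poisson kernel-type'' terminology. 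For the derivative spectrum I would use $x_n'=(D\mathbf{x})_n$ together with the flat bandlimited spectrum of $\mathbf{x}$, giving
\[ \mathcal{S}_{x'}(e^{i\theta})=\frac{\theta^2}{T_s^2}\,\mathcal{S}_x(e^{i\theta})=\frac{\theta^2}{T_s^2}\,\frac{\sigma_x^2}{2WT_s}\,\chi_{\{|\theta|\le 2\pi WT_s\}},\qquad \theta=(\omega-\nu)_{2\pi}. \]

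Substituting both spectra into the nested integrals and pulling out all constants, the integrand collapses to $\omega^2\,((\omega-\nu)_{2\pi})^2/(1-2\varphi\cos\nu+\varphi^2)$ restricted by the characteristic function, which is precisely the double integral $I$; the prefactor works out to $\sigma_\xi^2(1-\varphi^2)\sigma_x^2/(8\pi^2 W T_s^5)$. Finally I would eliminate $\sigma_x^2$ in favour of $\sigma_{x'}^2$ via $\sigma_{x'}^2=\tfrac{4\pi^2 W^2}{3}\sigma_x^2$ (the same identity already exploited in \eqref{SINADR}), which rearranges the prefactor into $\tfrac{\sigma_\xi^2}{T_s^2}\,\tfrac{3(1-\varphi^2)}{32\pi^4(WT_s)^3}\,\sigma_{x'}^2$ and yields \eqref{var_derEst}.

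The main obstacle is bookkeeping rather than a deep difficulty: one must handle the circular convolution and the modular reduction $(\cdot)_{2\pi}$ carefully, noting in particular that although $\mathbf{x}'$ is bandlimited to $|\theta|\le 2\pi WT_s$, the product $\mathbf{u}$ is \emph{not} bandlimited (the AR(1) spectrum spreads its power over all of $[-\pi,\pi]$), so the outer integral genuinely runs over the full band and the characteristic function must be tracked through the shifted argument. A secondary point is that $I$ is left unevaluated: it has no elementary closed form, so the proof ends by \emph{identifying} the double integral and assessing it numerically (as in Figure~\ref{fig:rat_cond}) rather than integrating it explicitly.
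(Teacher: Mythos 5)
Your proposal is correct and follows essentially the same route as the paper's proof: writing $\mathbf{u}=\boldsymbol{\xi}\odot\mathbf{x}'$, obtaining $\mathcal{S}_u$ as the circular convolution of the AR(1) spectrum with $\mathcal{S}_{x'}=|H_D|^2\mathcal{S}_x$, applying the Wiener--Khinchin variance formula with $|H_D(e^{i\omega})|^2=\omega^2/T_s^2$, and finishing with the substitutions $\sigma_\epsilon^2=\sigma_\xi^2(1-\varphi^2)$ and $\sigma_{x'}^2=4\pi^2W^2\sigma_x^2/3$, leaving $I$ for numerical evaluation exactly as the paper does. The only differences are cosmetic: you verify the WSS property and autocorrelation factorization explicitly where the paper cites it, and you carry a general $\sigma_x^2$ where the paper normalizes $\sigma_x^2=1$ without loss of generality.
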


\begin{remark}
A stochastic version of the Peano remainder theorem for Taylor expansions holds, cf. Lemma 3 in \cite{Yang_Zhou_Wang}. Further results comparing (conditional) variances of the objects \( x(nT_s + \xi_n, \omega) \) and \( x(nT_s, \omega) + x'(nT_s, \omega) \xi_n \) would require some care. First of all, the measurability/randomness of the maps \( \omega \mapsto x(nT_s + \xi_n(\omega), \omega) \) or \( (\omega_1, \omega_2) \mapsto x(nT_s + \xi_n(\omega_2), \omega_1)  \) must be carefully assessed, depending on whether the jitter process is assumed to live in the same probability space of the process \(x\) or not. A standard sufficient condition for the measurability of \( x(t, \omega) \) on the whole product space \( \mathbb{R} \times \Omega \) is the left- or right-continuity in \(t\) to hold \emph{for all} \( \omega \in \Omega \) (see Lemma 6.4.6 in \cite{Bogachev}), which might not be the case for our signal model. In that case, one should replace \(x\) with an indistinguishable copy obtained by redefining it on a \(P\)-null set so that \(t\mapsto x(t,\omega)\) is left-/right-continuous for every \(\omega\).
\end{remark}

\begin{figure}[ht]
\centering
    \includegraphics[width=0.24\textwidth]{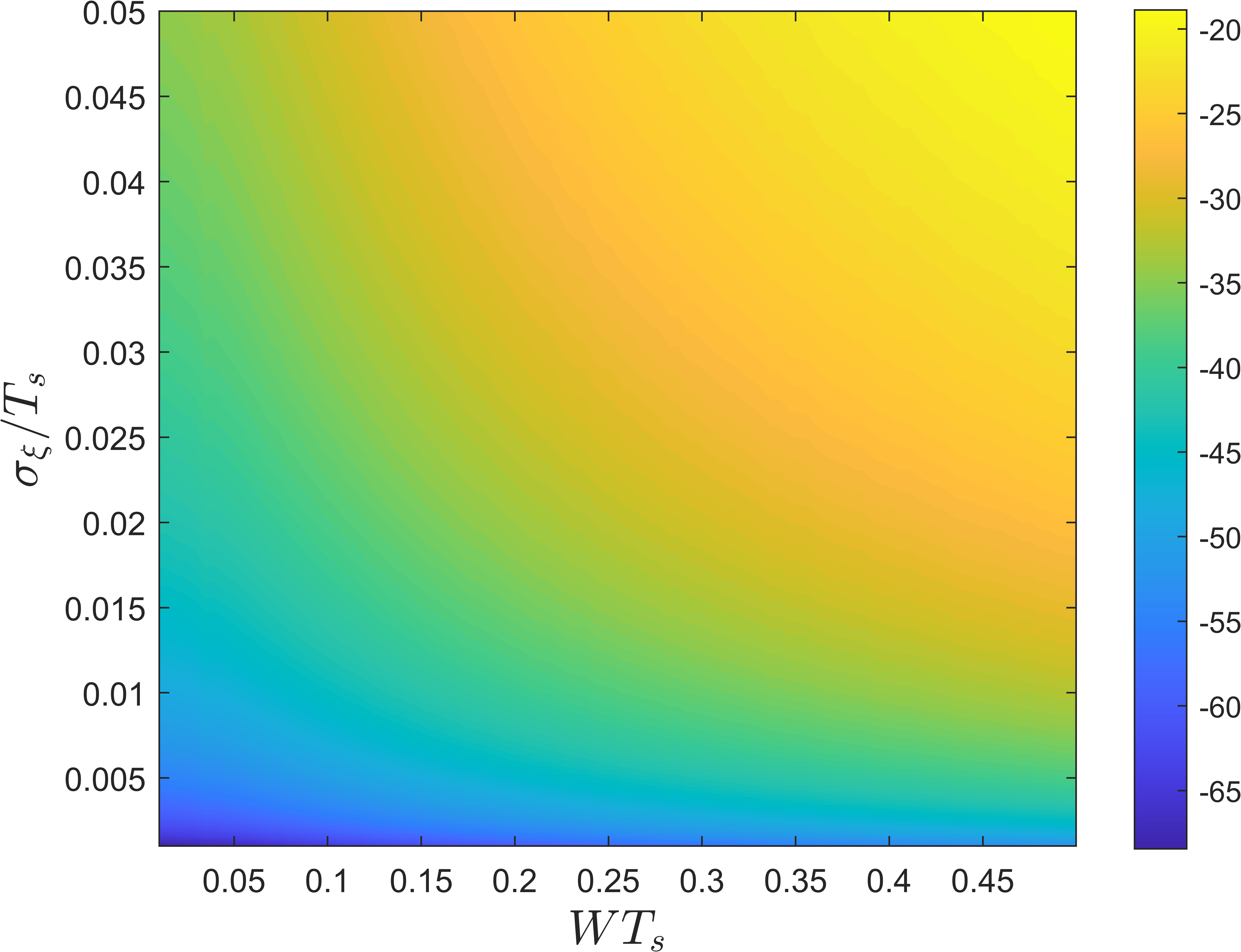}
    \includegraphics[width=0.24\textwidth]{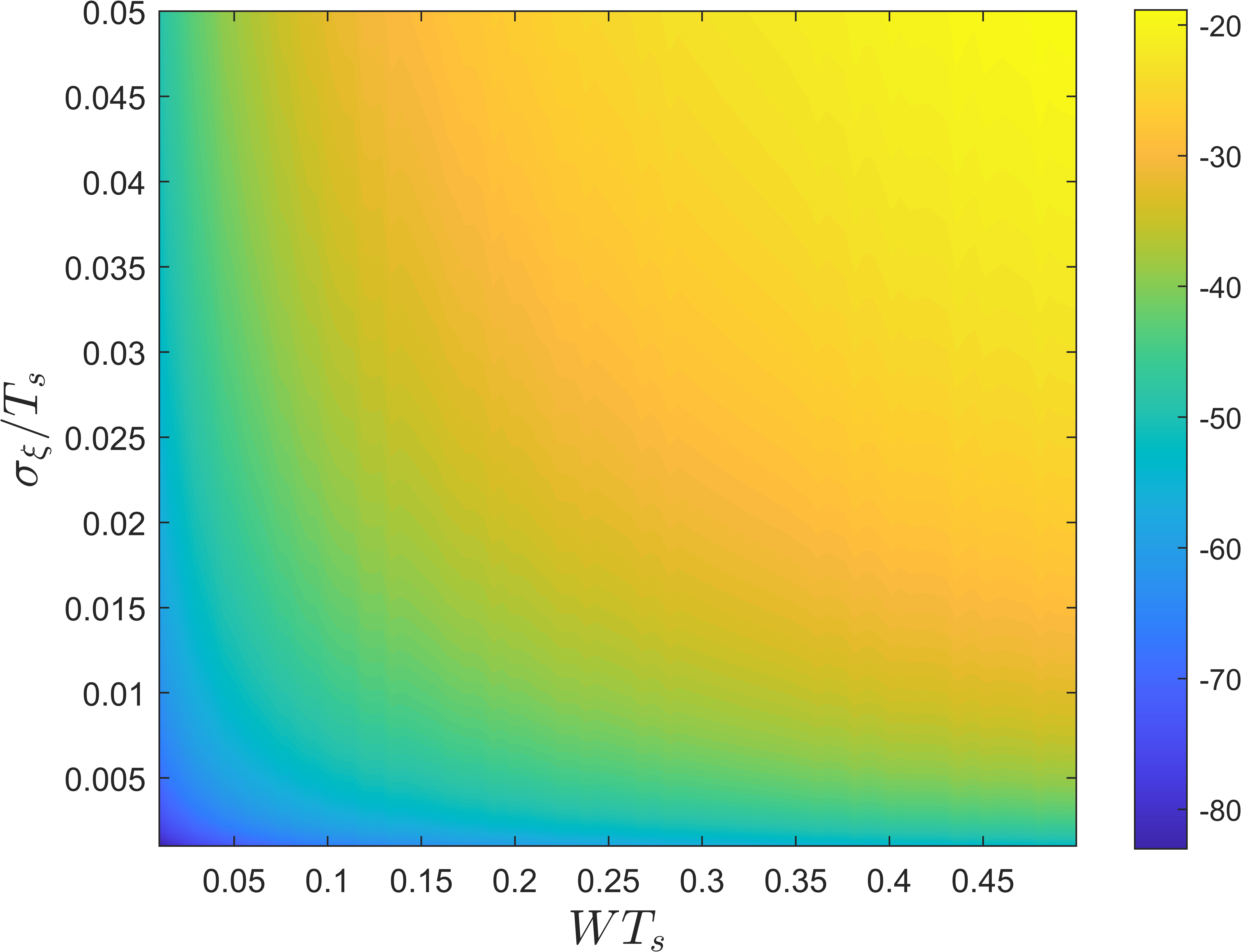}
    \caption{Level curves of the term multiplying \( \sigma_{x'}^2 \) in \eqref{var_derEst}; \( \varphi=0.95 \) left, \( \varphi=0.999 \) right, in dB. The double integral in \eqref{var_derEst} was numerically evaluated via nested \texttt{integral} in \texttt{MATLAB} and its accuracy was double-checked via a separate Monte Carlo where instances of the processes \( \mathbf{x} \), \( \boldsymbol{\xi} \) etc. were generated and the left-hand side of \eqref{var_derEst} estimated.}
    \label{fig:rat_cond}
\end{figure}

\section{Pilot-sample-based jitter compensation} 
\label{sec:main_sec}
This section contains the main contributions of this paper. We present two jitter tracking techniques able to mitigate the negative impact of this disturbance. Both algorithms assume knowledge of pilot samples at known time index positions, meaning that the true sample values at \( n \in I \subsetneq \mathbb{N}  \) will be assumed to be known a priori. The output stream of samples will be jittered, including the samples at pilot positions. A noisy estimate of the jitter process sample values can be obtained by rearranging equations \eqref{jitter_TaylorExp} and \eqref{jitter_TaylorExp_and_noise} at pilot positions: \begin{equation}  
\label{jitt_approx}\xi_n \approx \widetilde{\xi}_n = (y_n - x_n) / x' _n, \quad n \in I.
\end{equation}

Once the jitter process is estimated using one of the methods detailed in this section, the jitter distortion in \eqref{jitter_TaylorExp_and_noise} must be subtracted, the pilots removed, and the signal interpolated using e.g. the Gerchberg-Papoulis algorithm \cite{Papoulis_alg}\cite{Gerchberg} to ``fill the gaps'' left after the removal of pilot samples. Depending on the density of sample pilots, the ADC sampling frequency must be high enough to ensure the uniqueness of the bandlimited reconstruction. The workflow is summarized by the scheme in Figure \ref{fig:threebox}; the focus of this paper is within the green box.

\begin{figure}[ht]
  \centering
\begin{tikzpicture}[
  node style/.style = {
    draw, 
    minimum width=2.2cm, minimum height=2.2cm, text width=2.1cm, font=\small,
    align=center
  },
  arr/.style  = {
    thick, 
    -{Stealth[length=5pt,width=7pt]}
  }
]

\node[node style, draw=red, line width=1.2pt]    (A) at (0,0)   {Introduce known analog stimulus at transmitter side.};
\node[node style, draw=green, line width=1.2pt]  (B) at (3cm,0) {Use pilot samples to estimate \( \xi_n \, \forall \, n \)  and compensate the distorted signal.};
\node[node style, draw=red, line width=1.2pt] (C) at (6cm,0) {Remove sample pilots (if unwanted) and interpolate (e.g. via Gerchberg-Papoulis).};

\draw[arr] (A.east) -- (B.west);
\draw[arr] (B.east) -- (C.west);
\end{tikzpicture}
  \caption{Schematic three‐step process.}
  \label{fig:threebox}
\end{figure}

In contrast to pilot tone-based techniques \cite{Towfic_Ting_Sayed}, no knowledge of the signal's derivative is assumed by our techniques, i.e. \( x' _n \) is \emph{not} assumed to be known a priori at pilot positions or elsewhere.

\begin{figure}[!h]
    \centering
    \includegraphics[width=0.45\textwidth]{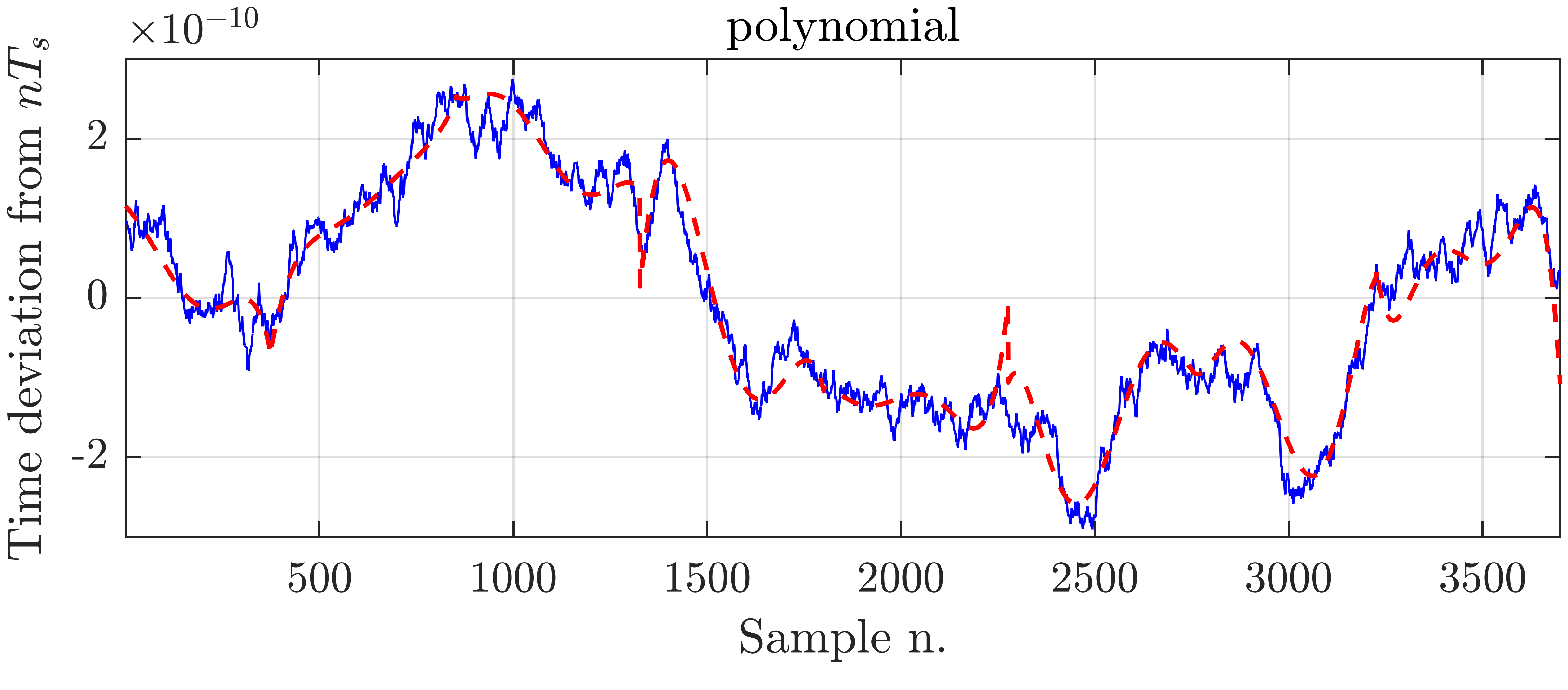}
    \includegraphics[width=0.45\textwidth]{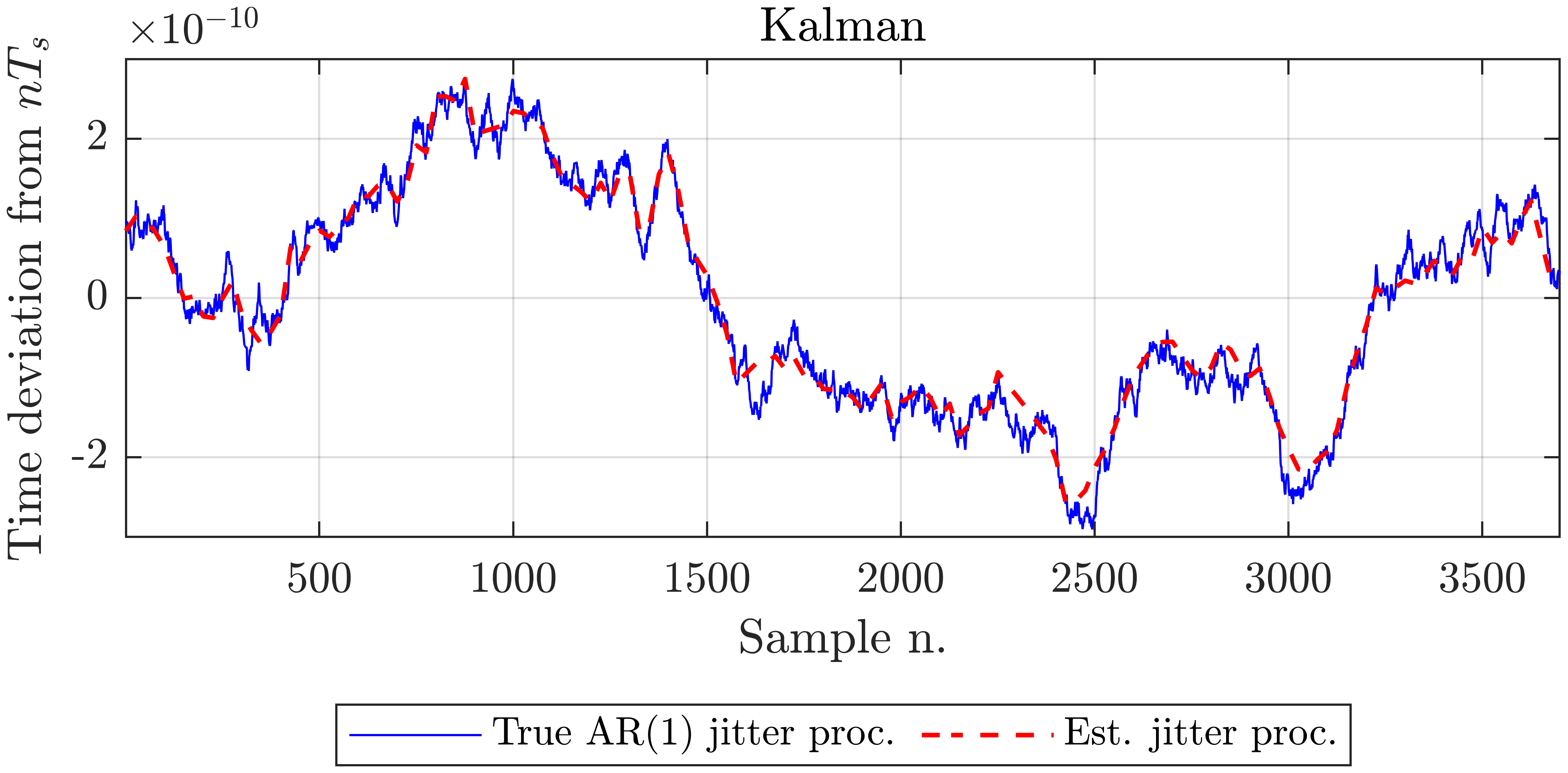}
    \caption{Jitter trackers comparison example, Algorithm \ref{alg:poly_alg} (left) vs Algorithm \ref{alg:KalmanF} (right).}
    \label{fig:example_polyvsKalman}
\end{figure}

 Moreover, the knowledge of sparse signal's sample values \( x_n \) at pilot positions is insufficient to compute its derivative at those positions. In our algorithms, we will use \( y' _n \) as a proxy instead, as justified by the analysis in Section \ref{subsec:approx_err}. For illustration, Figure \ref{fig:example_polyvsKalman} shows a tracking example: despite the large amount of missing data, both methods recover the overall behavior of the underlying autoregressive process.

\subsection{Robust blockwise polynomial interpolation}
\label{poly_tech}
In this section, the true latent jitter process is treated as blockwise deterministic, yet unknown, and slowly varying. Under the model \eqref{jitter_TaylorExp}, noisy estimates of the jitter at pilot positions can be obtained from \eqref{jitt_approx} as \( \widetilde{\xi}_n = \xi_n + w_n / x_n ' \). The additive noise is conditionally Gaussian but heteroscedastic because its variance is time-dependent; moreover, the unknown values \(x' _n \) are to be replaced with their estimations \( y'_n \). We set \( v_n = w_n / y'_n \). The idea is to perform an optimal short-time (i.e., blockwise) polynomial interpolation of the jitter process at the observation points \(\widetilde{\xi}_n \), \( n \in I \) for a given \(I\).

Let \( N \) be the total number of samples observed. Since we want the pilot samples to be uniformly spaced, we fix a value \( K \in \mathbb{N} \) (the number of data samples between pilot samples) and set \[ I = \{ n (K+1) + 1 \, : \, n = 0, \dots, \lfloor (N - 1)/ (K +1) \rfloor \};\] we can write \( I =  \bigcup_{l = 1}^L I_l \), where each \( I_l \) is a (sorted) set of strictly increasing integer indices such that the last element of \(I_l\) coincides with the first element of \(I_{l+1}\). Moreover, we require that each \( I_l\) contains the same number \(C\) of elements, except possibly for \(I_L\). We indicate indices belonging to \( I_l \) with \( i_c ^{(l)} \in I_l \). As noted earlier in this section, we assume that there exist \(L\) vectors \( \overline{\boldsymbol{\beta}}^{(l)} \) of polynomial coefficients such that \( \mathbf{M}\overline{\boldsymbol{\beta}}^{(l)} = \boldsymbol{\xi}^{(l)}  \), where \( \mathbf{M} \) is the rectangular measurements matrix \begin{equation} \label{meas_matr}
\mathbf{M} = \begin{bmatrix} 1 & i_1 ^{(1)} T_s & \dots & (i_1 ^{(1)} T_s)^d \\ 
1 & i_2 ^{(1)} T_s & \dots & (i_2 ^{(1)} T_s)^d \\ \vdots & \vdots & \ddots & \vdots \\
1 & i_C ^{(1)} T_s & \dots & (i_C ^{(1)} T_s)^d
\end{bmatrix}
\end{equation}
and \( \boldsymbol{\xi}^{(l)} =  (\xi_{i_1 ^{(l)}}, \dots, \xi_{i_C ^{(l)}})^t \). Our cost function is the conditional mean-square error 

\begin{equation} \label{poly_cond_exp}
 \mathbb{E}[\| \mathbf{M} \mathbf{S} \widetilde{\boldsymbol{\xi}}^{(l)} - \boldsymbol{\xi}^{(l)} \|_2 ^2 \, | \, \y '  ],
\end{equation}
that we seek to minimize as a function of \(\mathbf{S}\). We do not use ordinary least-squares because the measurement noise \(\widetilde{\boldsymbol{\xi}}^{(l)}\) has time-varying variance. Equation \eqref{poly_cond_exp} can be written as standard bias-variance decomposition (by expanding the square) \begin{equation} \label{bias+var} \| (\mathbf{I} - \mathbf{M} \mathbf{S} ) \boldsymbol{\xi}^{(l)} \|_2^2 + \mathbb{E}[\|\mathbf{M} \mathbf{S} \mathbf{v}^{(l)}\|_2 ^2 \, | \, \y ' ], \end{equation} and by imposing the (necessary) unbiasedness condition \( \mathbf{S}\mathbf{M}  = \mathbf{I}  \) (cf. (3.28) in \cite{Rao_Toutenberg}) we find that the second term in \eqref{bias+var} is minimized if \begin{equation} \label{weighted_MPinv}
\widehat{\mathbf{S}}_l = (\mathbf{M}^t \mathbf{W}_l  \mathbf{M})^{-1} \mathbf{M}^t \mathbf{W}_l  
\end{equation}
where \( \mathbb{E}[\mathbf{v}^{(l)}(\mathbf{v}^{(l)})^t ]^{-1} =  \mathbf{W}_l \propto \operatorname{diag}\bigl\{ (y'_{i^{(l)}_c})^2\bigr\}_{c=1}^C \), while the bias term vanishes. Notice that \( \widehat{\mathbf{S}}_l \mathbf{M}  = \mathbf{I} \); in addition we recall that \eqref{weighted_MPinv} solves the weighted least squares problem for heteroscedastic noise \cite{Aitken}.

The sought polynomial coefficients are thus computed via \( \widehat{\boldsymbol{\beta}}^{(l)} = \widehat{\mathbf{S}}_{l} \widetilde{\boldsymbol{\xi}}^{(l)}  \), leading to an explicit formula for the interpolating / denoising polynomial functions:
\begin{equation*}
p_l (t) \coloneqq \widehat{\beta}^{(l)} _1 + \widehat{\beta}^{(l)} _2 t + \dots + \widehat{\beta}^{(l)} _{d + 1} t^d.
\end{equation*}
The estimated jitter process at all time instants will hence be 
\begin{equation} \label{interp_values}
    \widehat{\xi}_n  = \begin{cases} p_1(nT_s), & 1 \le n \le i_C^{(1)} \\  p_{l} ((n - i_1^{(l)} + 1)T_s), & i_1 ^{(l)} < n \le  i_C ^{(l)}, \ l = 2, \dots, L. \end{cases}
\end{equation}
The dejittered signal is then estimated as: \begin{equation} \label{de_jitt_sig}
\widehat{x}_n = y_n - \widehat{\xi}_n y'_n.
\end{equation}
We summarize the algorithm in the following flowchart: 

\begin{algorithm}[ht]
\caption{Robust polynomial interpolation}
\label{alg:poly_alg}
\begin{algorithmic}[1]
\Require Observations \(y_n\), \(y'_n\), pilot indices \(I\) and \(x_n\) for \(n \in I \); number of samples \(N\), measurements matrix \( \mathbf{M} \) \eqref{meas_matr}.
\Statex
\For{\(l=1\) to \(L\)}
\State Form measurements vector \( \widetilde{\boldsymbol{\xi}}^{(l)} \) according to \eqref{jitt_approx}
\State Minimize \eqref{poly_cond_exp} in \( \mathbf{S} \): \( \widehat{\mathbf{S}}_l \gets (\mathbf{M}^t \mathbf{W}_l  \mathbf{M})^{-1} \mathbf{M}^t \mathbf{W}_l  \)
\State Interpolating polynomial coefficients \( \widehat{\boldsymbol{\beta}}^{(l)} \gets \widehat{\mathbf{S}_l}\widetilde{\boldsymbol{\xi}}^{(l)}  \)
\State Interpolated values at non pilots pos. \( \widehat{\boldsymbol{\xi}}^{(l)}  \) \eqref{interp_values}
\EndFor 
\Statex
\State \Return de-jittered signal \(\widehat{x}_n = y_n - \widehat{\xi}_n y'_n\) \eqref{de_jitt_sig} \( \forall \, n\).
\end{algorithmic}
\end{algorithm}

\subsection{Kalman smoother}

In this section, we apply Kalman filtering and smoothing to our jitter tracking problem, which are standard techniques that provide optimal recursive estimates for systems described using the state-space model formalism. The state-space model for the linearized jitter tracking problem under investigation is \begin{equation} \label{jitter_state-model}  \begin{cases} \xi_n = \varphi \xi_{n-1} + \epsilon_n \\ y _n = x_n + \xi_n x' _n + w_n,  
\end{cases} \end{equation} from which it should be obvious to the reader how Kalman filtering and smoothing can estimate the process \( \xi_n \) provided that sufficiently accurate values of the parameter \( \varphi \), \( \sigma_\epsilon ^2 \) and white noise variance \( \sigma_w ^2 \) are available.

Our implementation of the Kalman filter and smoother routine is described in Algorithm \ref{alg:KalmanF}. It follows the standard textbook routine (see, e.g., \cite{Simon}), but it includes a slight modification to handle missing data at non-pilot positions: the measurement covariances \( R_n \) are set to \( = +\infty \), making the Kalman gain null. This correctly encodes the fact that the autoregressive state is observed only at the pilot indices (i.e., measurement updates are performed only on pilots). Equivalently, between two consecutive pilot updates separated by \(K\) data samples, the AR(1) dynamics imply the \(K\)-step relation \begin{equation} \label{eq:subsampl_AR1} \xi_n = \varphi^{K+1}\xi_{n-K-1} + \nu_n, \quad \nu_n \sim \mathcal{N}\!\left(0,\ \sigma_\epsilon^2 \frac{1 - \varphi^{2(K+1)}}{1-\varphi^2}\right),\end{equation}
and the smoother interpolates the intermediate states using \eqref{jitter_state-model}. Notice that we replace the observation ``matrices'' \(x'_n\) with their estimates \(y'_n\). Algorithm~\ref{alg:KalmanF} clarifies our implementation.

\begin{algorithm}[ht]
\caption{(Inexact) Kalman Filter and Smoother flow for \eqref{jitter_state-model}}
\label{alg:KalmanF}
\begin{algorithmic}[1]
\Require Observations \(y_n\), \(y'_n\), pilot indices \(I\) and \(x_n\) for \(n \in I \); number of samples \(N\); AR(1) parameter \(\varphi\); AR process variance \( Q_n = \sigma_\epsilon ^2 \); measurement noise variance \(R_n = \sigma_w^2\) for \(n \in I \), \( R_n = +\infty \) elsewhere; initial state \(\widehat{\xi}_{0|0}\) and covariance \(P_{0|0}\).
\Statex
\State \textbf{Forward Pass}
\For{\(n=1\) to \(N\)}
    \If{\(n>1\)}
        \State \textbf{Prediction:}
        \State \(\widehat{\xi}_{n|n-1} \gets \varphi \widehat{\xi}_{n-1|n-1}\)
        \State \(P_{n|n-1} \gets \varphi^2 P_{n-1|n-1} + Q_n\)
    \Else
        \State Set \(\widehat{\xi}_{n|n-1} \gets \widehat{\xi}_{0|0}\)
        \State Set \(P_{n|n-1} \gets P_{0|0}\)
    \EndIf
    \State \textbf{Measurement Update:}
    \State Define \(H_n \gets y_n '\)
    \State Innovation: \(\widetilde{y}_n \gets y_n - x_n - H_n \widehat{\xi}_{n|n-1}\)
    \State Innovation covariance: \(S_n \gets |H_n|^2 P_{n|n-1} + R_n \)
    \State Kalman gain: \(K_n \gets P_{n|n-1} \overline{H_n} / S_n\)
    \State Update state estimate: \(\widehat{\xi}_{n|n} \gets \widehat{\xi}_{n|n-1} + K_n \widetilde{y}_n\)
    \State Update error covariance: \(P_{n|n} \gets P_{n|n-1} - K_nH_n P_{n|n-1}\)
\EndFor 

\Statex

\State \textbf{Backward Pass (Kalman Smoother)}
\State Let \(\widehat{\xi}_{N|N}\) be the final filter estimate from above
\State \(\widehat{\xi}_{N|N}^\text{smooth} \gets \widehat{\xi}_{N|N}\)

\For{\(n = N-1\) down to \(1\)}
    \State Compute smoother gain: \( C_n \gets \varphi P_{n|n} / P_{n+1|n}\)
    \State Smoothed state: \(\widehat{\xi}_{n|N}^\text{smooth} \gets \widehat{\xi}_{n|n} + C_n(\widehat{\xi}_{n+1|N}^\text{smooth} - \widehat{\xi}_{n+1|n}).
    \)
\EndFor

\Statex

\State \Return de-jittered signal (cf. \eqref{de_jitt_sig}): \(\widehat{x}_n \gets y_n - \widehat{\xi}_{n|N}^\text{smooth} y'_n \)

\end{algorithmic}
\end{algorithm}

\subsubsection{AR(1) process and white noise parameters estimation via conditional Maximum Likelihood} \label{subsec:MLEest}
To achieve optimal performance, the Kalman smoother requires exact knowledge of the three parameters \( \varphi \), \( \sigma_\epsilon \), and \( \sigma_w \). One way to obtain these parameter values is to estimate them during a preliminary calibration phase and then set them as constants. However, this solution would not account for potential time variations in the parameters. We propose in this section a conditional maximum likelihood estimation technique for \( \varphi \), \( \sigma_\epsilon \), and \( \sigma_w \) that can be repeated efficiently for each block of \(N\) samples. A numerical evaluation is deferred to Section~\ref{sec:num_sim}.

At pilot positions \( n \in I \) we can rearrange the second equation in \eqref{jitter_state-model} and define
\[
m_n \coloneqq \frac{y_n - x_n}{x'_n} = \xi_n + \frac{w_n}{x'_n},
\]
which is conditionally Gaussian given the realized value of \(x'_n\). In practice, \(x'_n\) is not available and we use the plug-in approximation \(x'_n \approx y'_n\); accordingly, all conditional statements below are understood with respect to the realized values of \(y'_n\).

Assume that pilots are uniformly spaced with \(K\) data samples between two pilots, i.e., consecutive pilots are separated by \(K{+}1\) time indices. Then the conditional covariance structure of \(\{m_n\}_{n\in I}\) can be written as
\begin{equation} \label{eq:mcov_struct} \begin{split}
    \gamma_m(n;d) & \coloneqq \mathbb{E}\!\left[m_{n+d(K+1)}\, m_n \,\middle|\, \x' \right] \\ &
    = \frac{\sigma_\epsilon^2}{1-\varphi^2}\,\varphi^{|d|(K+1)} \;+\; \frac{\sigma_w^2}{(x'_n)^2}\,\chi_{\{0\}}(d),  \end{split}
\end{equation}
where \(d\in\mathbb{Z}\) is the pilot lag (so that \(n+d(K+1)\in I\)).

If \( N \in \mathbb{N} \) samples are collected by the ADC, the covariance in \eqref{eq:mcov_struct} can be written as an \( M \times M \) symmetric real matrix with \(M=\lceil N /(K+1) \rceil\), that we decompose as \( \boldsymbol{\Sigma} + \mathbf{D} \), where \( \boldsymbol{\Sigma}_{ij} = \sigma_\epsilon^2 \varphi^{(K+1)|i-j|} / (1 - \varphi^2) \) has a Toeplitz Kac-Murdoch-Szeg\H{o} structure and \( \mathbf{D} = \text{diag}(\sigma_{w_{i_n}}^2) \). As already noted, \( \mathbf{m}  \) is conditionally Gaussian, and thus its conditional negative log-likelihood function \( \ell \) given the measurements \( \widetilde{\mathbf{m}} \) is
\begin{equation} \label{eq:negloglkhd}
\ell(\boldsymbol{\theta}; \widetilde{\mathbf{m}}) = \frac{M}{2} \log( 2 \pi) + \frac{1}{2} \log(\text{det}( \boldsymbol{\Sigma} + \mathbf{D})) + \frac{1}{2} \widetilde{\mathbf{m}}^t ( \boldsymbol{\Sigma} + \mathbf{D})^{-1} \widetilde{\mathbf{m}}
\end{equation}
with \( \boldsymbol{\theta} = (\sigma_\epsilon, \varphi, \sigma_w) \).
It can be shown that
\begin{equation} \label{eq:det}
\text{det}(\boldsymbol{\Sigma}) = \left( \frac{\sigma_\epsilon ^2}{1 - \varphi^2} \right)^M (1 - \varphi^{2(K+1)})^{M-1} \ne 0;
\end{equation}
additionally \( \boldsymbol{\Sigma}^{-1} \) is tridiagonal and has the closed-form expression
\begin{equation} \label{eq:inv}
\begin{split}
(\boldsymbol{\Sigma}^{-1})_{ij}
& = \frac{(1- \varphi^2)}{\sigma_\epsilon^2 (1- \varphi^{2(K+1)})} \\ &
\times
\begin{cases}
1 & \text{if } i = j = 1 \text{ or } i=j=M, \\
1 + \varphi^{2(K+1)} & \text{if } i=j, \, i \notin \{1,M\}, \\
-\varphi^{K+1} & \text{if } |i-j| = 1, \\
0 & \text{else.}
\end{cases}
\end{split}
\end{equation}
Therefore, if we write \( (\boldsymbol{\Sigma} + \mathbf{D})=\boldsymbol{\Sigma} (\mathbf{I} + \boldsymbol{\Sigma}^{-1} \mathbf{D}) \), the complexity of calculating \( \text{det}(\boldsymbol{\Sigma} + \mathbf{D}) \) is \( O(M) \) because \( (\mathbf{I} + \boldsymbol{\Sigma}^{-1} \mathbf{D}) \) is tridiagonal and \( \det (\boldsymbol{\Sigma}) \) is known from \eqref{eq:det}.

Similarly, we can use the Woodbury matrix identity to deal with the inversion of \( \boldsymbol{\Sigma} + \mathbf{D} \):
\[
(\boldsymbol{\Sigma} + \mathbf{D})^{-1}
= \boldsymbol{\Sigma}^{-1} - \boldsymbol{\Sigma}^{-1}(\mathbf{D}^{-1} + \boldsymbol{\Sigma}^{-1})^{-1} \boldsymbol{\Sigma}^{-1},
\]
from which we can deduce that the complexity of evaluating \( \widetilde{\mathbf{m}}^{H} (\boldsymbol{\Sigma} + \mathbf{D})^{-1} \widetilde{\mathbf{m}} \) is also \( O(M) \), since \((\mathbf{D}^{-1} + \boldsymbol{\Sigma}^{-1})\) is tridiagonal. Hence, since the evaluation of \eqref{eq:negloglkhd} can be performed very efficiently by exploiting the problem structure and using standard matrix theory tricks, we can, in turn, efficiently minimize \eqref{eq:negloglkhd} to obtain the maximum likelihood estimate of \( \boldsymbol{\theta} \) for use in the Kalman filter routine.

\begin{figure*}[!b]
  \centering
  \subfloat[\( \sigma_\xi / T_s = 5 \cdot 10^{-3} \), \( \varphi = 0.999 \)]{%
    \includegraphics[width=0.49\textwidth]{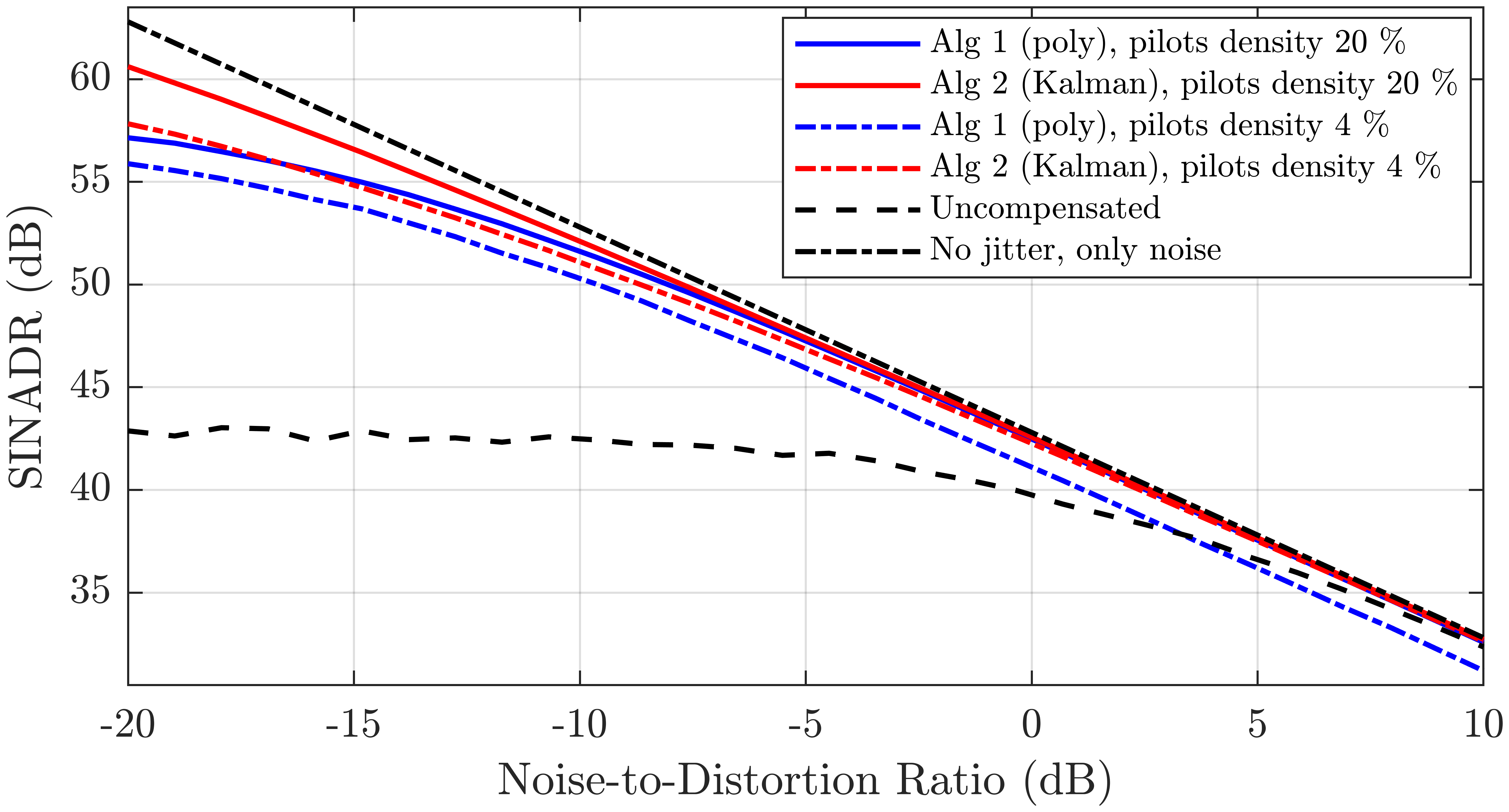}%
    \label{fig:0p5jitt}}%
  \hspace{0.01\textwidth}%
  \subfloat[\( \sigma_\xi / T_s = 1.5 \cdot 10^{-2} \), \( \varphi = 0.999 \)]{%
    \includegraphics[width=0.49\textwidth]{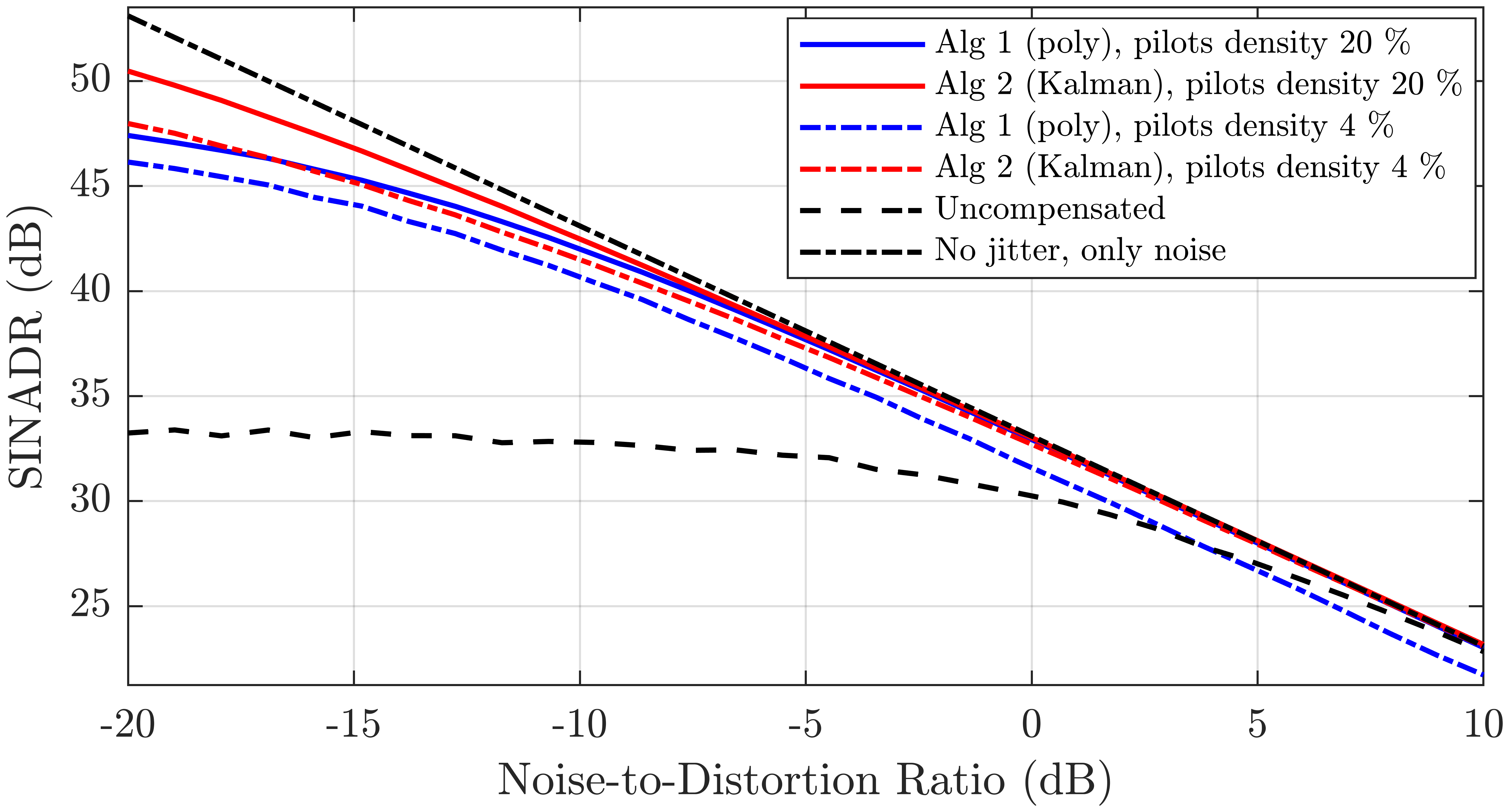}%
    \label{fig:1p5jitt}}
    \caption{Varying NDR analysis for two different jitter levels. The white noise power is progressively increased. The ``No jitter, only noise'' line shows \( \sigma_x ^2 / \sigma_w ^2 \) in dB.}
  \label{fig:wide_jitt_grid}
\end{figure*}

\subsection{Complexity analysis}
\label{sec:complanalysis}
The Kalman smoother processes each sample within the stream; a single iteration in both forward and backward passes has constant cost, thus the total cost for \(N\) samples is \( O(N) \), no matter what the pilot density \(p = (L(C-1) + 1)/N\) is.

In the polynomial technique, the matrix to be inverted in \eqref{weighted_MPinv} is \( (d+1) \times (d+1) \); the full cost of \eqref{weighted_MPinv}, including matrix multiplications, is thus \( O(C (d+1)^2 + (d+1)^3) \). Since this is repeated \(L\) times, the total cost for calculating the polynomial coefficients will be \( O(L(C (d+1)^2 + (d+1)^3)) \); the polynomial evaluations cost \( O(N(d+1)) \), bringing the total complexity to \( O(N(d+1) + L(C (d+1)^2 + (d+1)^3))  \). Rewriting the latter as function of the pilot density provides a complexity of \[ O\left(N\left((d+1) + p (d+1)^2 \frac{C + d+1}{C-1}\right) + (d+1)^2 \frac{C + d+1}{C-1} \right). \]

\section{Simulation results}

\label{sec:num_sim}

\subsection{Baseline evaluation}

Throughout most of this section, the test signal \(x\) is a bandlimited Gaussian process (cf. Section~\ref{sec:sys_model}) with a cutoff frequency of \(40\)~MHz, sampled at \(100\)~MS/s and normalized so that \(\sigma_x^2=1\).
The discrete-time waveform was generated by applying the desired brickwall filter to a sequence \(u\) of \(N=2^{18}\) independent real-valued Gaussian random variables.
The samples were taken at the uniform time instants \(nT_s\), and the jittered samples \(x(nT_s+\xi_n)\) were obtained by evaluating the underlying bandlimited continuous-time waveform via a truncated version of the Whittaker--Shannon interpolation formula,
\begin{equation}\label{formula:Whittaker_Shannon}
x(nT_s + \xi_n) = \sum_{m=-\infty}^{\infty} x(mT_s)\,
\sinc\!\left(\frac{nT_s+\xi_n-mT_s}{T_s}\right).
\end{equation}

We designated a subset of sample indices \(I \subset \{0,\dots,N-1\}\) as \emph{pilot positions} and assumed that the corresponding nominal (unjittered) samples \(\{x(nT_s)\}_{n\in I}\) were available as references; the effects of pulse shaping and matched filtering were therefore neglected in this section.
In a first set of simulations, we tested the relationship between background white noise and jitter, and their effects on the proposed algorithms. In Figure \ref{fig:wide_jitt_grid}, we fix the amount of jitter at two levels. We swept over \(30\) different white noise power levels (and ran \(5\) experiments each) so that the NDR ranges from \( -20  \) to \( 10 \) dB, and we assessed the overall SINADR (cf. formula \eqref{SINADR}). Two different pilot-sample densities are considered as well, while keeping fixed both the block length \( C = 500 \) and the polynomial degree \(d=4\) fixed for Algorithm \ref{alg:poly_alg}. The Kalman smoother has oracle knowledge of \( \sigma_w ^2 \), \( \sigma_\epsilon ^2 \) and \( \varphi = 0.999 \). As expected, the Kalman smoother displays superior performance throughout the entire NDR range; improvements after compensation up to \( 1-2 \) dB, after which the jitter disturbance is essentially drowned in noise and thus no longer distinguishable.

Another batch of simulations is shown in Figure \ref{fig:Pdensity_noise-15dB} where the NDR is kept constant, while the density of the (uniformly spaced) pilot samples is gradually increased from \( 1 \% \) to \( 20 \% \). 

\begin{figure}[H]
    \centering
    \begin{subfigure}[b]{0.49\textwidth}
        \centering
        \includegraphics[width=\textwidth]{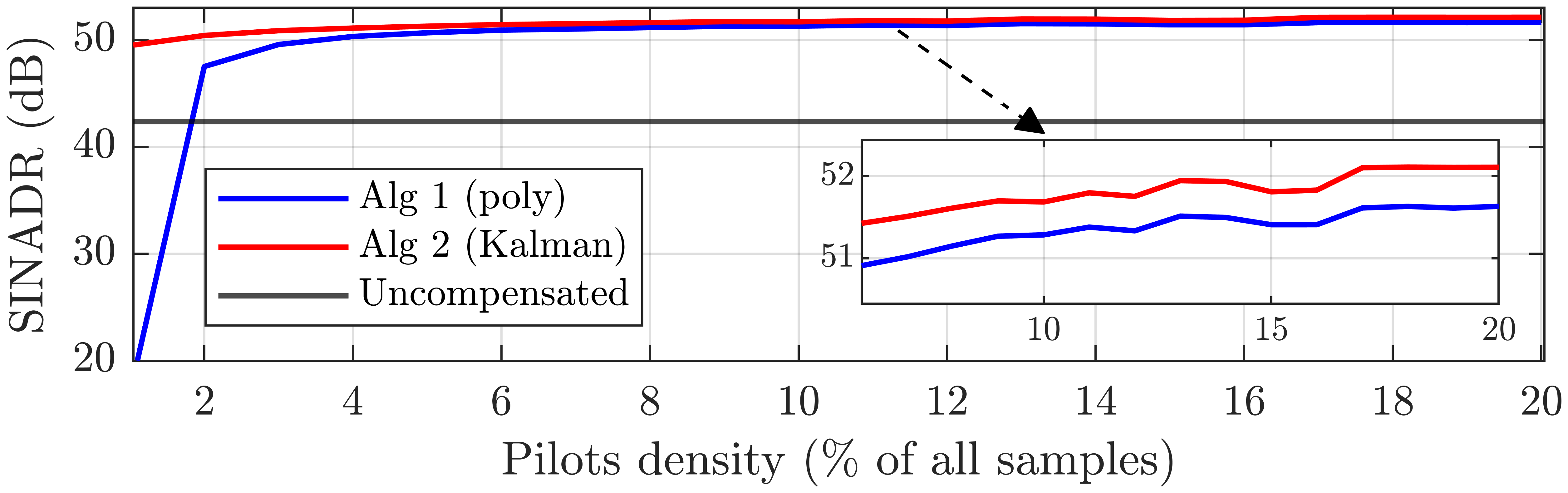}
    \end{subfigure}
    \begin{subfigure}[b]{0.49\textwidth}
        \centering
        \includegraphics[width=\textwidth]{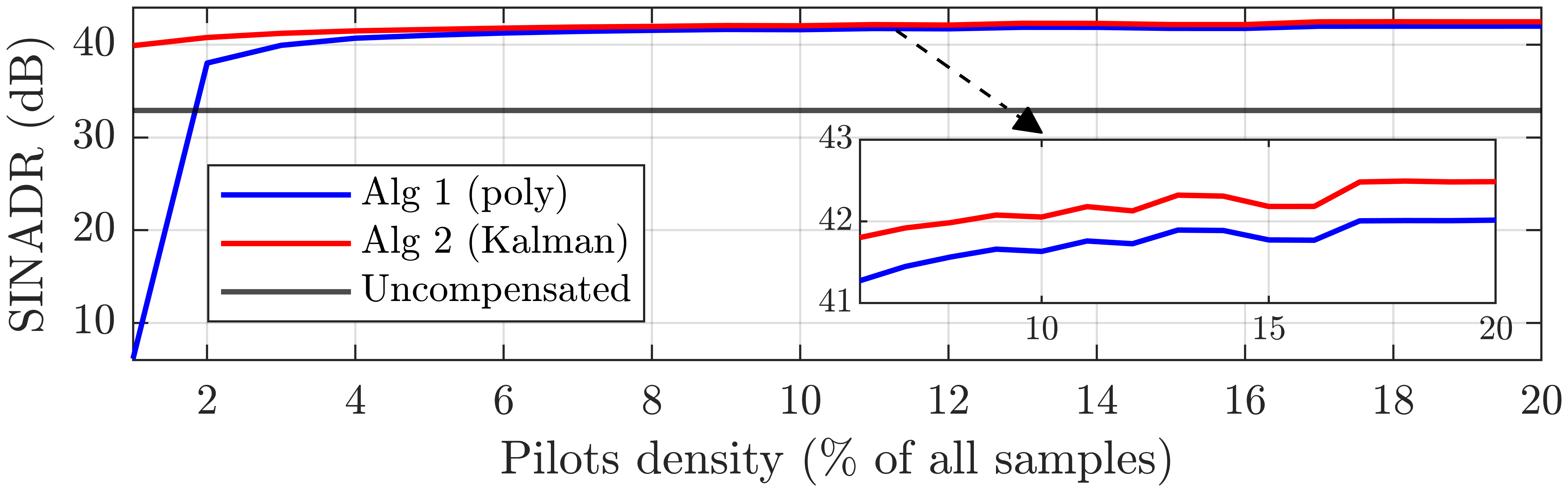}
    \end{subfigure}
    \caption{From top to bottom: \( \sigma_\xi / T_s = 5 \cdot 10^{-3}, \ 1.5 \cdot 10^{-2} \) respectively. NDR is fixed to \( = -10 \text{ dB}  \).}
    \label{fig:Pdensity_noise-15dB}
\end{figure}

Both techniques benefit from the higher pilot density, with a notable jump in performance for Algorithm \ref{alg:poly_alg} when going from \(1\) to \(2 \%\); at the same time, the Kalman smoother performs well even with pilot density as low as \(1 \%\).

In Figure \ref{fig:jitt_sweep} we show the algorithms' performance for increasing jitter levels at constant noise power; the latter is chosen so that the NDR is \(-10\) dB at the lowest jitter level. The compensated signals achieve \(6 - 15\) dB improvements in SINADR up to \( \approx 4 \% \) jitter, beyond which the polynomial method overtakes the Kalman smoother and continues to deliver consistent gains up to \(10 \% \) of jitter. 

\begin{figure}[h!]
        \centering
        \includegraphics[width=0.49\textwidth]{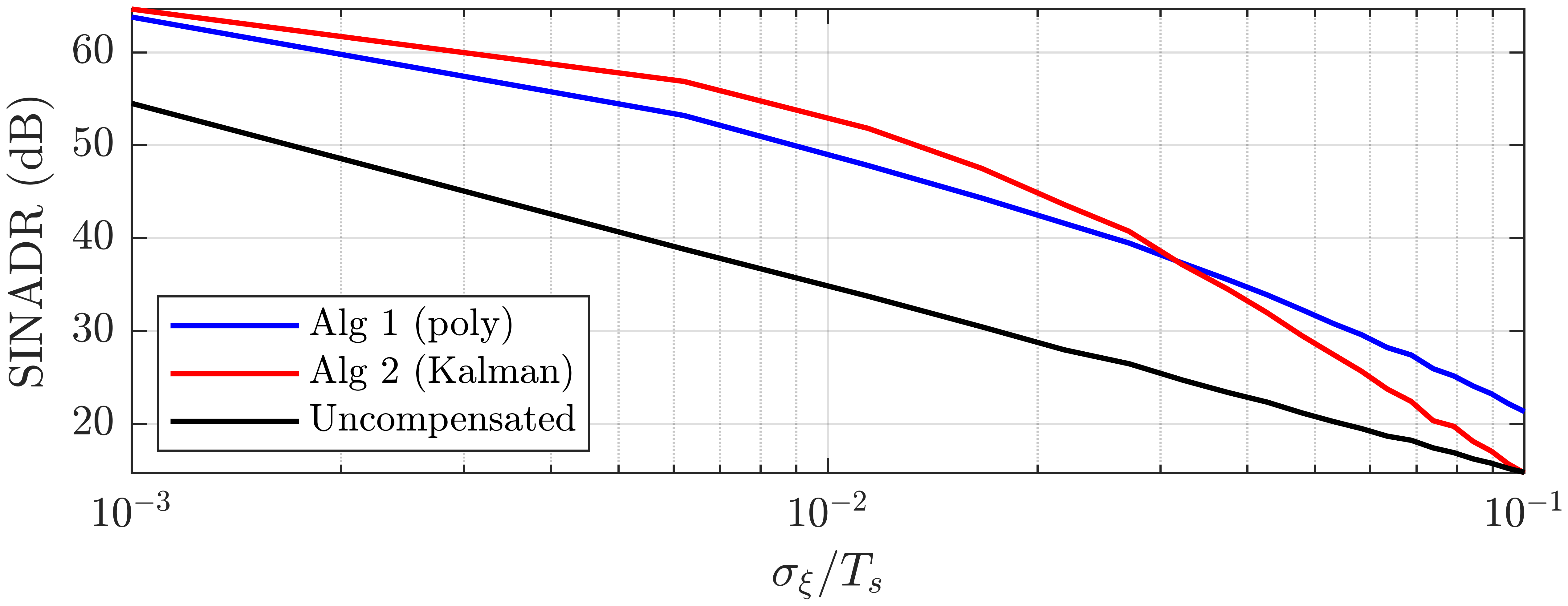}
    \caption{Fixed noise power (\(-10 \text{ dB}\) NDR at lowest jitter level, \( -42 \text{ dB}\) NDR at \( 4\% \)), increasing jitter.}
    \label{fig:jitt_sweep}
\end{figure}

This numerical observation highlights one limitation of a model-based approach, such as the Kalman filter. When the standard deviation of the jitter process is sufficiently large so that the first-order Taylor expansion becomes increasingly unreliable, the measurement model in \eqref{jitter_state-model} becomes invalid. Although higher-order Taylor terms may be lumped into \(w_n\), the latter would become colored and both signal- and jitter-dependent. Moreover errors due to model mismatch introduced in the Kalman recursion at certain time instants propagate, causing additional inaccuracies. In contrast, the polynomial interpolation technique, being block-based, does not propagate errors and is more robust to misspecified noise variance, especially if the polynomial degree is low.

We note that the parameters \(d\) and \(C\), together with the pilot density, interact in the polynomial technique. In these experiments, we did not attempt to identify their optimal combination; instead, we chose sub-optimal values empirically to avoid both over-fitting and under-fitting. The problem of optimally interpolating Wiener-type processes with polynomial functions has already been studied in, e.g., \cite{Blu_Unser}. Thus, a more rigorous investigation of this interplay is left for future work.

A final numerical simulation is shown in Figure \ref{fig:bw_sweep}, where we swept over different signal bandwidths to show the SINADR of the compensated signals, in a manner similar to Figure \ref{fig:wide_jitt_grid}. As the bandwidth increases, we kept either the NDR (left) or the SNR (right) constant.

\begin{figure}[h!]
        \centering
    \begin{subfigure}[b]{0.5\textwidth}
        \centering
        \includegraphics[width=\textwidth]{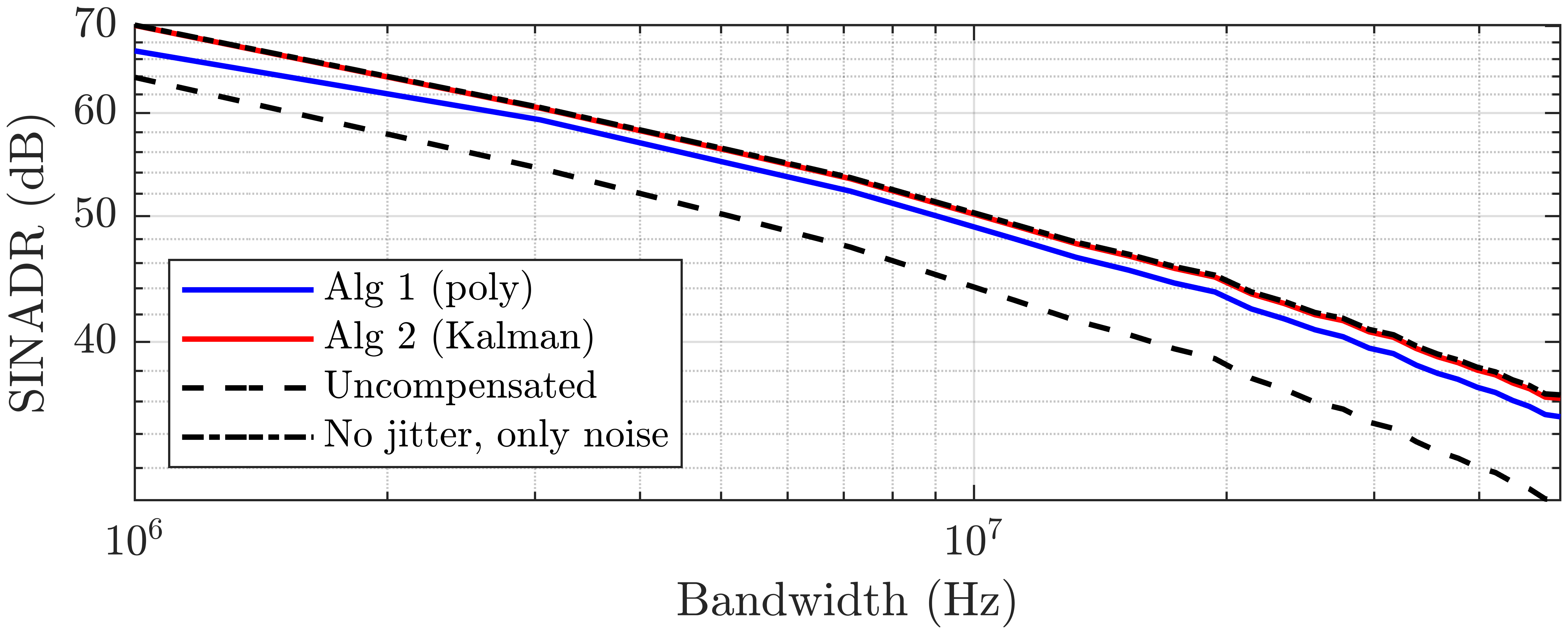}
    \end{subfigure}
    \begin{subfigure}[b]{0.5\textwidth}
        \centering
        \includegraphics[width=\textwidth]{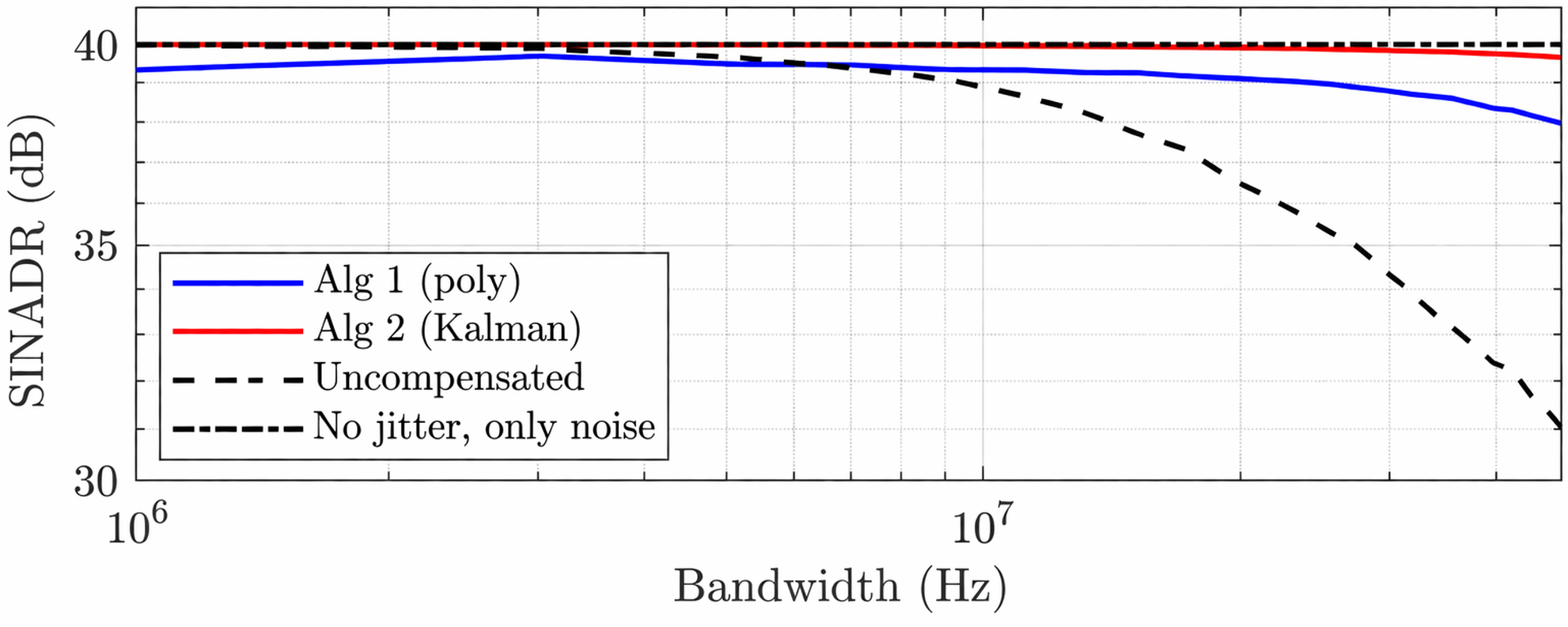}
    \end{subfigure}
    \caption{Algorithm performances vs signal bandwidth. Fixed NDR \(= -5 \) dB (top) and fixed SNR \( = 40 \) dB (bottom). \( \varphi = 0.999 \), \( \sigma_\xi / T_s = 1.5 \cdot 10^{-2} \).}
    \label{fig:bw_sweep}
\end{figure}

\subsection{Robustness of the Kalman smoother to parameter inaccuracies}
We numerically tested the parameters estimation technique described in Section \ref{subsec:MLEest}, and the results are summarized in Figure \ref{fig:ex_vs_est_params}. The plot shows a negligible performance loss when the parameters obtained by minimizing \eqref{eq:negloglkhd} are used in Algorithm \ref{alg:KalmanF}. Due to the non-convexity of \eqref{eq:negloglkhd} with respect to \( \boldsymbol{\theta}\), a careful optimization strategy is required. In our Monte Carlo simulations, we used the built-in MATLAB \texttt{fmincon} solver coupled with iterative randomization of the starting point with \(1000\) different initializations and some crude constraints on the parameter values. The test signal and the parameters values were kept fixed throughout the simulation, while the white-noise and jitter realizations were regenerated at each run.
\begin{figure}[h!]
  \centering
    \includegraphics[width=0.49\textwidth]{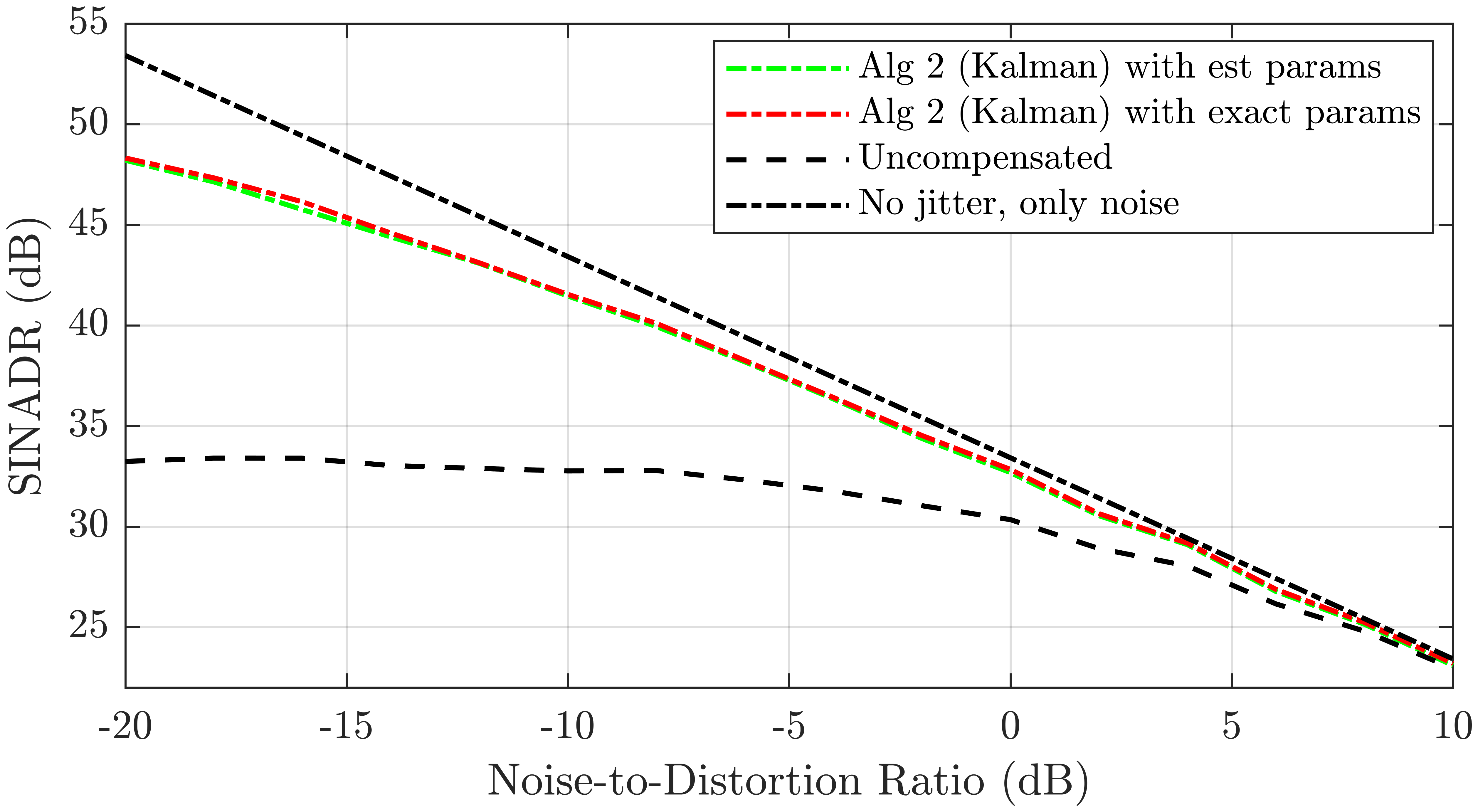}%
    \caption{Kalman smoother SINADR performance with noise and process parameters estimated from data. \(\varphi= 0.999\) and \( \sigma_\xi / T_s = 1.5 \cdot 10^{-2} \).}
    \label{fig:ex_vs_est_params}
\end{figure}

 \subsection{Radar use case example: suppression of jitter from nearly co-channel interferer}

The numerical examples displayed in the previous section consider jitter suppression as an in-band distortion; the power spectral densities of compensated and uncompensated signals are mostly similar because the term \( \xi x' \) has smaller variance than the signal itself if \eqref{small_jitter_hyp} holds. This can be rigorously proved by modifying the proof of Proposition \ref{yprim_approx_xprim}. The fact that the term \( \xi x' \) produces mainly in-band distortions is not entirely obvious, because its power spectral density is not compactly supported. However, its out-of-band contributions are negligible. This can be seen by studying the integral function \begin{equation} I(\nu) = \int_{-\overline{\omega}}^{\overline{\omega}} \frac{ \omega^2}{ 1 + \varphi^2 - 2 \varphi \cos(\nu - \omega) } \, d \omega,  \end{equation} which exhibits two peaks at \( |\nu| \approx \overline{\omega}  \) and decays rapidly once \( |\nu| > \overline{\omega} \).

To better illustrate the spectral impact of the proposed methods, we outline a different (and somewhat idealized) scenario motivated by radar applications. We assume a bistatic pair consisting of an illuminator of opportunity (IO) at known location and an antenna receiver with two channels, one for surveillance (SC) and one for reference (RC). The RC acquires high-fidelity copies of the IO waveform, while the SC primarily captures targets' echoes. However, the SC is also corrupted by direct-path interference (DPI) which is usually much stronger than echoes and can thus completely overshadow them.

\begin{figure}[h!]
  \centering
  \subfloat[Kalman\label{fig:blocker_scenarioKalman}]{
    \includegraphics[width=0.47\columnwidth]{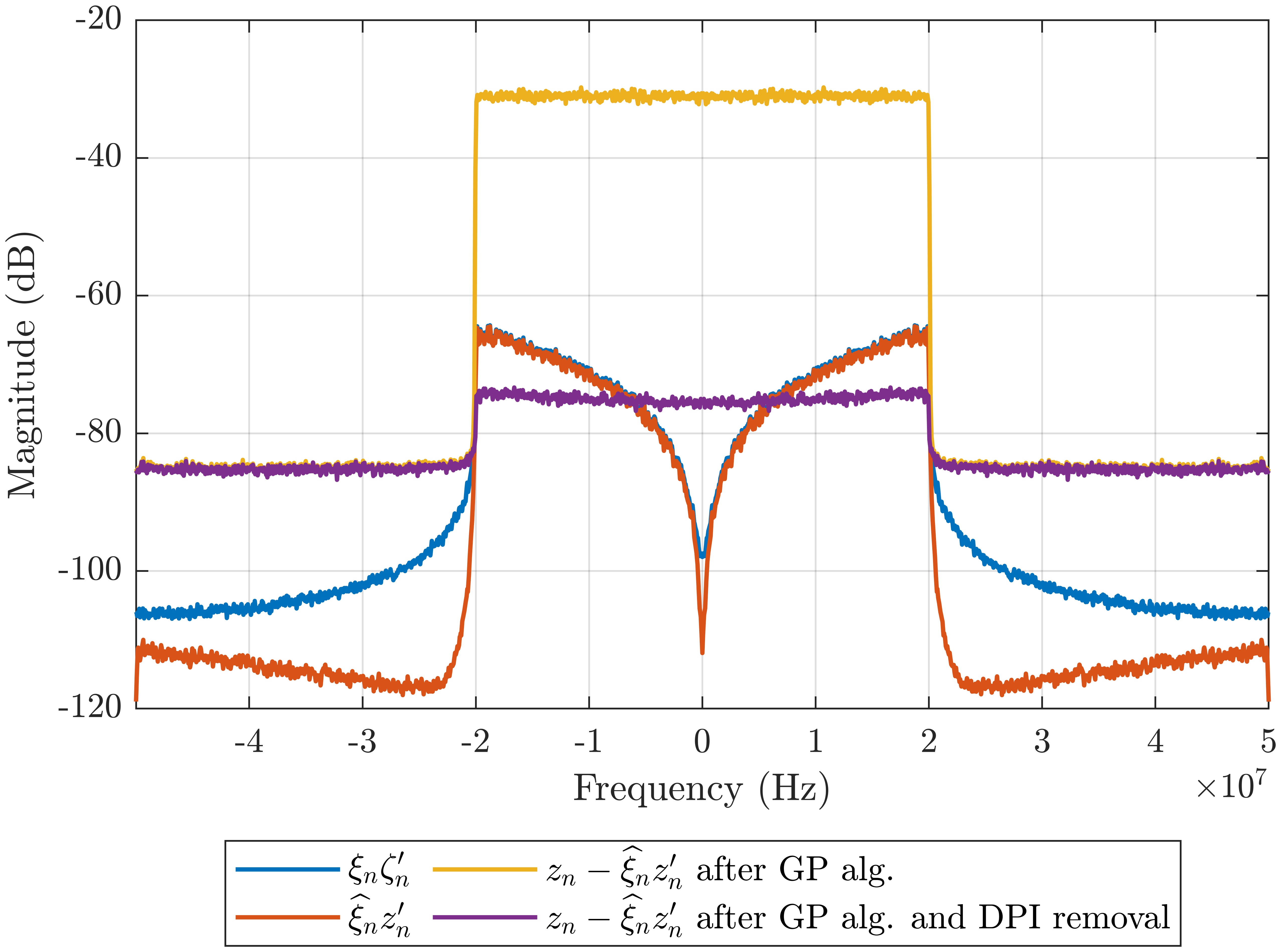}
  }
  \hfil
  \subfloat[Polynomial\label{fig:blocker_scenarioPoly}]{
    \includegraphics[width=0.47\columnwidth]{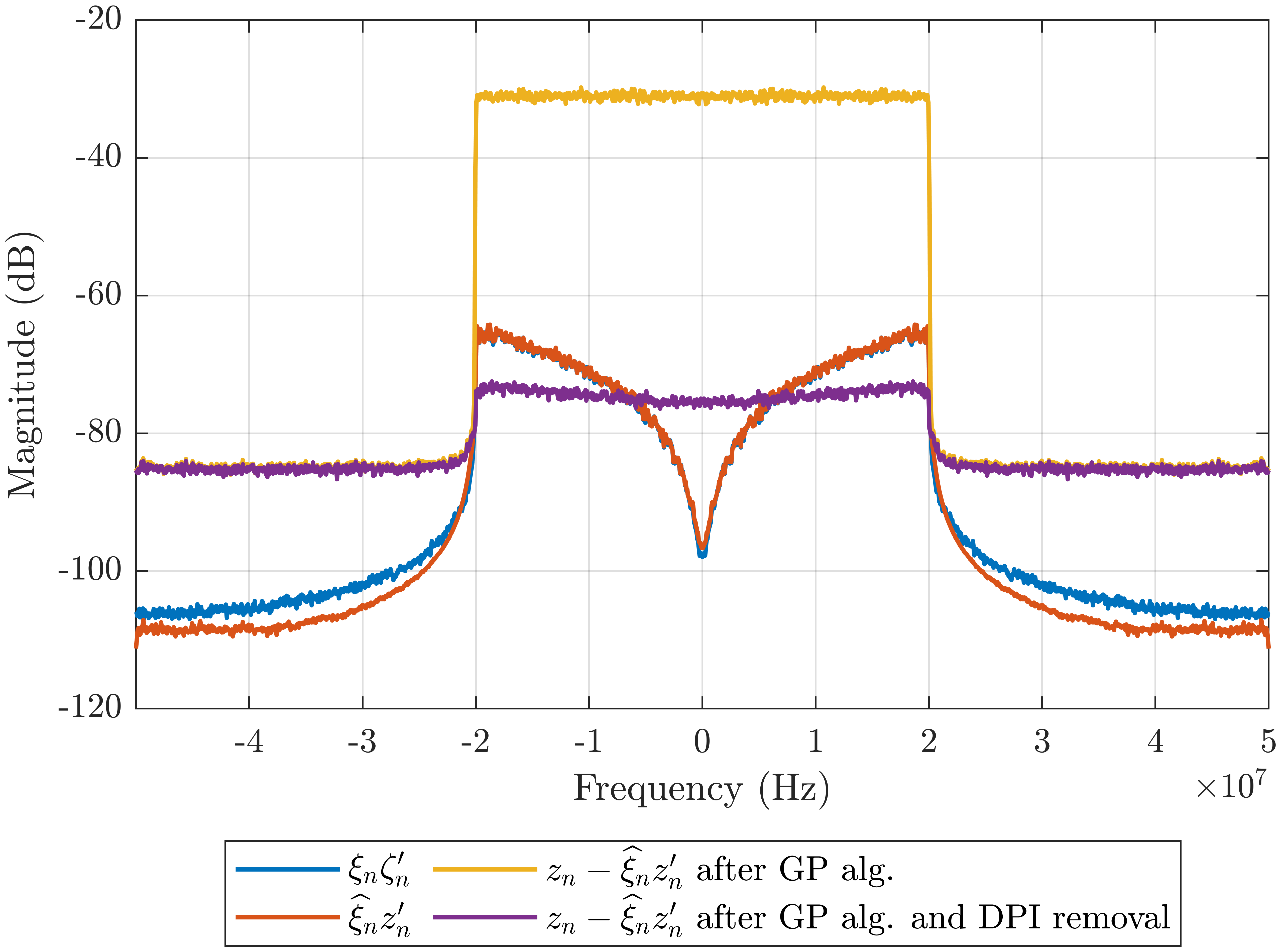}
  }
  \caption{Power spectral density illustrations at different compensation stages.}
  \label{fig:blocker_comp}
\end{figure}

The signal at the ADC output can be written as \begin{equation} \label{eq:radarEx} z_n = x_n + \xi_n x_n ' + \zeta_n + \xi_n \zeta_n ' + w_n. \end{equation} Here, \( \zeta_n \) is the contribution from the DPI, while \( x_n \) is the wanted signal. The term \( \zeta_n + x_n  \) is assumed to be known if \( n \in I \). By assumption \( \sigma_\zeta ^2 \gg \sigma^2 _x   \) and \( \sigma_{\xi \zeta'} ^2 > \sigma^2 _x\). This means that even the interferer-induced jitter term dominates the signal of interest.

We consider a situation in which the term \( \zeta_n \) can be canceled with negligible impact on the other terms (at least if the target is moving and there is a nontrivial Doppler component in \(x_n\)), for example by projecting it away using subspace methods in combination with the RC signal knowledge, see, e.g. \cite{Coloneetal}. After dejittering \(z_n\) using the methods presented in the previous section and optimally interpolating the ``holes'' left after removing the pilot samples using the Gerchberg-Papoulis algorithm, we then project away \( \zeta_n \), and what is left is the desired echo signal. Figure \ref{fig:blocker_comp} shows examples of power spectral densities of the different compensation stages described in this section.

\subsection{Relation to existing techniques}
In a significant portion of the existing literature, clock jitter is modeled as Gaussian and spectrally white, and the mitigation schemes proposed there are designed accordingly. In that context, the notion of ``tracking'' is vague. Adapting those techniques to the correlated-jitter scenario is non-obvious, thus extensive work is needed and deemed beyond the scope of the present work. The work closest in spirit to ours is \cite{Towfic_Ting_Sayed}, where clock jitter is modeled with a \(1/f^2\) spectral decay and the jitter process is tracked by placing a sideband pilot tone that is first downconverted to DC and then lowpass-filtered. The resulting phase oscillations are extracted by minimizing an expected value, which results in an infinite-impulse-response (IIR)-type recursion. We therefore benchmark our routines with the technique in \cite{Towfic_Ting_Sayed}.

We nonetheless stress that our algorithms and \cite{Towfic_Ting_Sayed} remain structurally different: our proposed methods exploit known pilot samples intertwined with the payload, whereas \cite{Towfic_Ting_Sayed} separates pilots and payload in frequency via a pilot tone. It is not entirely clear how to construct a scenario in which all relevant quantities (bandwidth usage, pilot-symbol density, pulse-shaping filters, etc.) can be consistently mapped across the three techniques; nevertheless, it is still possible to draw qualitative conclusions, which we do in this section.

We generated \(N_{\text{sym}} = 43691 \) Gaussian data symbols \(s_i\) and we pulse-shaped them into a real-valued \(40 \text{ MHz}\)-wide (from \(-20\) to \(20\) MHz) waveform using a windowed sinc interpolation, and the resulting analog waveform \( s (t)\) was sampled at \( 120 \text{ Msps} \) at the ADC side. The symbols at uniformly spaced indices \(I \subseteq \{1, \dots, N_{\text{sym}}\} \) were designated as pilots; those at \(I_d\) as data. Since the oversampling factor is an integer, there is no intersymbol interference (ISI) at the symbol-synchronous sampling instants, and therefore knowledge of sample values at those instants can be inferred from pilot symbol values.

The analog waveform \(s(t)\) serves as a common test signal for all methods so that there are no differences in digital signal processing operations that could lead to different values of the baseline \(\text{EVM}_{\text{uncomp}} \) \eqref{eq:EVM}. For \cite{Towfic_Ting_Sayed}, the pilot tone is injected outside the payload band (hence using a larger effective bandwidth) and it is removed prior to symbol recovery, so that the baseline \(\text{EVM}_{\text{uncomp}}\) is evaluated on the payload component in \(I_d\) only.

The optimal forgetting factor in the infinite impulse response-type recursion \cite{Towfic_Ting_Sayed} was computed using oracle knowledge of the jitter process realization, not feasible in practice, while the parameters \(\varphi \), \( \sigma_\epsilon \) and \( \sigma_w \) were estimated using the Maximum Likelihood routine described in section \ref{subsec:MLEest}. In the polynomial method, we have not optimized the degree and the batch lengths.

After inserting jitter and white noise, we performed jitter estimation and compensation. We then downsampled and applied the matched filter, along with the optimal rescaling factor, to obtain estimates \( \{ \widehat{s}_i \}_{i \in I_d} \) of the data symbol values. The performances are here assessed via the \(\Delta\) Error Vector Magnitude (EVM) defined as follows:
\begin{equation} \label{eq:DeltaEVM}
\Delta \text{EVM}_{\text{dB}} \coloneqq 20 \log_{10} \left( \frac{\text{EVM}_{\text{uncomp}}}{\text{EVM}_{\text{comp}}} \right)
\end{equation}
and
\begin{equation} \label{eq:EVM}
\text{EVM}_{\text{comp}} = \sqrt{ \frac{ \frac{1}{|I_d|} \sum_{i \in I_d} |\widehat{s}_i - s_i|^2  }{\frac{1}{|I_d|} \sum_{i \in I_d} | s_i|^2} }
\end{equation}
and equivalently for \(\text{EVM}_{\text{uncomp}} \), replacing the \( \widehat{s}_i \) with their uncompensated counterparts.

\begin{table*}[!t]
\caption{Comparative summary of the two proposed methods}
\label{tab:threecol_bottom}
\centering
\renewcommand{\arraystretch}{1.15}
\setlength{\tabcolsep}{8pt}

\begin{tabularx}{\textwidth}{>{\raggedright\arraybackslash}X
                                >{\centering\arraybackslash}X
                                >{\centering\arraybackslash}X}
\toprule
\textbf{} & \textbf{Polynomial interpolation \ref{alg:poly_alg}} & \textbf{Kalman smoother \ref{alg:KalmanF}}\\
\midrule
High-level complexity (cf. Section \ref{sec:complanalysis})  & \(O\left(N\left((d+1) + p (d+1)^2 \frac{C + d+1}{C-1}\right) + (d+1)^2 \frac{C + d+1}{C-1} \right) \) & \(O(N)\)\\
Robustness to low pilot density (cf. Figure \ref{fig:Pdensity_noise-15dB}) & \(3-4 \%\) density required for optimal performance & As low as \(1 \%\) density is enough for optimal performance \\
Robustness to model mismatch (cf. Figure \ref{fig:jitt_sweep})  & Less sensitive to measurement model mismatch & More sensitive to measurement model mismatch \\
Robustness to parameter inaccuracies (cf. Section \ref{subsec:MLEest}) & Model agnostic & Parameters can be accurately estimated \\
\bottomrule
\end{tabularx}

\end{table*}

The outcome of this qualitative analysis is shown in Figure \ref{fig:comp_literature}. We plot the \( \Delta \text{EVM}_{\text{dB}} \) on the \(y\) axis whereas on the \(x\) axis we report two overlaid quantities, \( p = |I| / N_{\text{sym}} \) (pilot symbol density) for our algorithms and \% \(p_t\) of the unit total power budget allocated to the pilot tone for \cite{Towfic_Ting_Sayed}. By construction \( \text{EVM}_{\text{uncomp}} \) is constant across methods and across all density/power levels. Note that while \( p = 0 \Longleftrightarrow p_t = 0 \) (no pilots), the relationship between the two is not necessarily linear.

\begin{figure}[h!]
  \centering
    \includegraphics[width=0.49\textwidth]{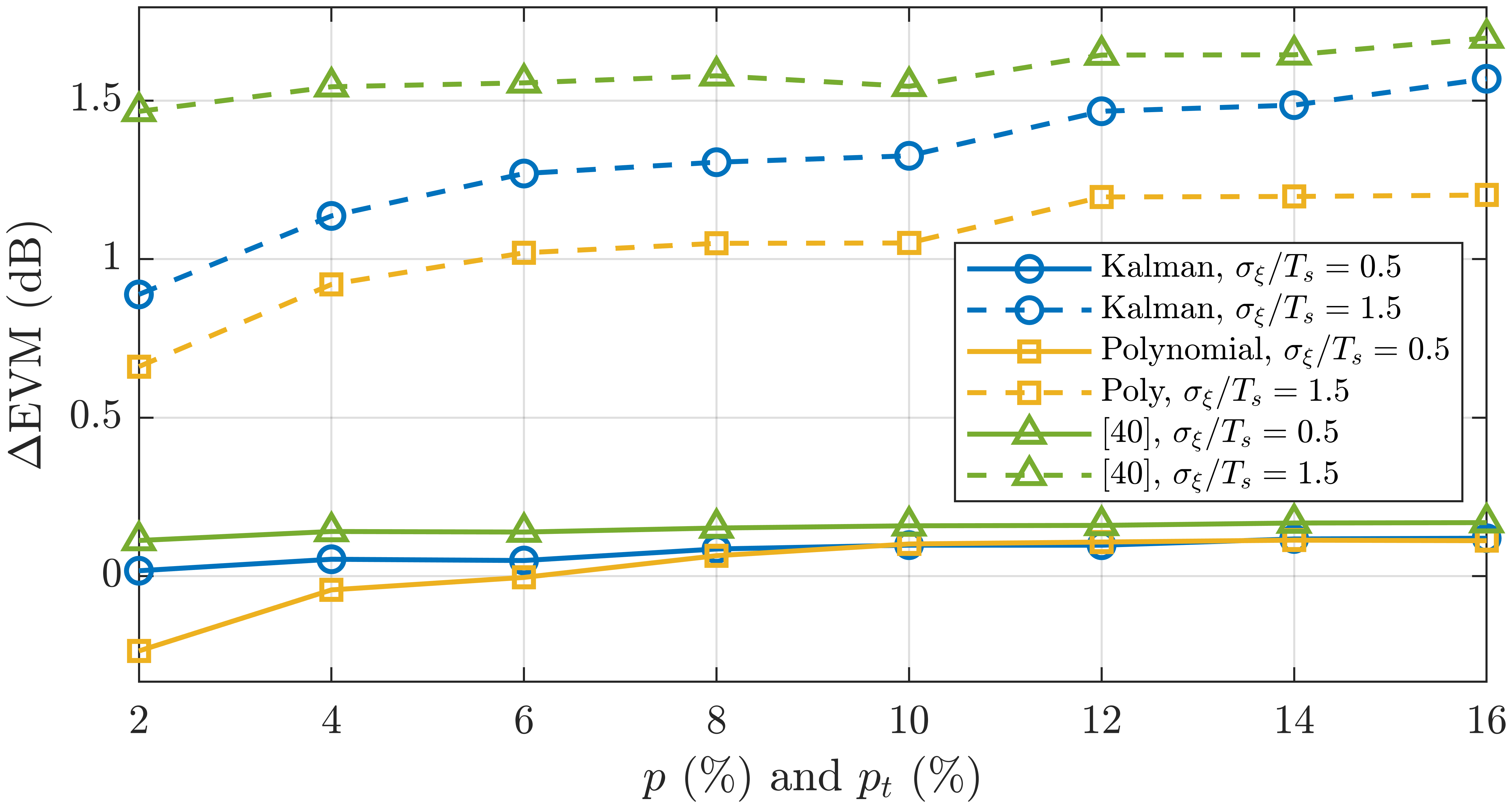}%
    \caption{EVM performances. On the \(x\) axis: pilot-symbol density for Algorithms~\ref{alg:poly_alg} and~\ref{alg:KalmanF}, and pilot-tone fractional power for \cite{Towfic_Ting_Sayed}. \(\varphi= 0.999\).}
    \label{fig:comp_literature}
\end{figure}

Algorithms~\ref{alg:poly_alg} and~\ref{alg:KalmanF} display a qualitative behavior consistent with that seen in Fig.~\ref{fig:Pdensity_noise-15dB}: at low pilot density, Algorithm~\ref{alg:poly_alg} struggles to deliver a positive gain, and both curves somewhat flatten beyond a certain threshold. The method in \cite{Towfic_Ting_Sayed} performs better overall; however, the three methods are not directly comparable. Moreover, \cite{Towfic_Ting_Sayed} is favored by its use of additional bandwidth and oracle (genie) knowledge of the jitter process realization; at higher power levels \cite{Towfic_Ting_Sayed} is expected to display a performance loss due to low SNR \cite{GerosaAnttilaEriksson}.

\section{Conclusions and future research}
In this paper, we propose two algorithms to address stochastic distortions in baseband signals caused by local-oscillator timing jitter in analog-to-digital converters. We modeled the jitter as a first-order autoregressive process. The Kalman smoother approach shows robust and nearly optimal performance across a wide range of scenarios (cf. Figure \ref{fig:wide_jitt_grid}), and both methods display significant improvements in SINADR even with low pilot-sample densities (Figure \ref{fig:Pdensity_noise-15dB}). We also assess their impact when pushing the linear approximation \eqref{jitter_TaylorExp} to its limits, and we find that for very large jitter the polynomial compensation algorithm could be more suitable, as it is less sensitive to outliers and model mismatches (Figure \ref{fig:jitt_sweep}). Future research will focus on incorporating additional distortions (e.g. phase noise when the signal is upconverted to radio frequency, strong nearby blockers already contaminated with Digital-to-Analog Converter (DAC) jitter, power-amplifier nonlinearities, harmonic distortions, non-zero ISI) in the compensation routines.

\section{Acknowledgment}
We thank the anonymous reviewers and the Associate Editor for the many insightful and helpful comments and suggestions.

We acknowledge the use of ChatGPT (OpenAI) for grammar and clarity suggestions and as a sounding board to improve exposition. The authors are solely responsible for all technical content and conclusions.

\begin{appendix}
\begin{proof}[Proof of Proposition \ref{sec_order_varEst}]
This follows from basic properties of random variables. Indeed if \( \text{var} (x' _n) = \sigma_{x'} ^2  \), it is well-known \cite{Papoulis} that \( \text{var} (x ''_n) = 12 \pi^2 \sigma_{x'} ^2 W^2 / 5   \), under the assumption that \( \sigma_x ^2 = 1  \) and \( \mathcal{S}_x (f) = 1 / (2W) \chi_{ \{|f| \le W \}} (f)  \). 

The two processes \( \xi_n ^2  \) and \( x'' _n \) originate from unrelated sources and can thus be considered independent. Therefore we have \[ \begin{split} \text{var} (\xi_n ^2 x ''_n) & =  \text{var} (\xi_n ^2) \text{var}( x ''_n)  + \text{var} (\xi_n ^2) \underbrace{\mathbb{E}[x'' _n]^2}_{=0}   + \text{var} (x'' _n) \mathbb{E}[\xi_n ^2]^2 \\ & = 36 \pi^2 \sigma_\xi ^4 \sigma_{x'} ^2 W^2 / 5;  \end{split}  \]
similarly 
\[ \begin{split} \text{var} (\xi_n  x '_n) & =  \text{var} (\xi_n ) \text{var}( x '_n)  + \text{var} (\xi_n ) \underbrace{\mathbb{E}[x' _n]^2}_{=0}   + \text{var} (x' _n) \underbrace{\mathbb{E}[\xi_n ]^2}_{=0} \\ & =  \sigma_\xi ^2 \sigma_{x'} ^2.  \end{split}  \]
The conclusion follows if \(36 \pi^2 \sigma_\xi ^2 W^2 / 5 \ll 1  \), which is true under the ``small jitter'' hypothesis \eqref{small_jitter_hyp} and the fact that \( WT_s < 1 / 2 \), as a consequence of the Shannon-Nyquist sampling theorem. 
\end{proof}

  \begin{proof}[Proof of Proposition \ref{yprim_approx_xprim}]
We define \( \mathbf{u} \coloneqq \boldsymbol{\xi} \odot \x ' \) and we assume without loss of generality that \( \sigma_x^2 = 1 \). Moreover, to highlight the dependence on the factor \( W T_s \), we use here normalized frequencies \( \omega \in [- \pi, \pi] \). The spectral density of the product of two WSS discrete-time processes can be written as convolution \cite{Oppenheim_Schafer}\begin{equation} \label{first_conv} \mathcal{S}_{u} (e^{i\omega}) = \frac{1}{2 \pi} \int_{-\pi}^\pi \mathcal{S}_{\xi}( e^{i\nu}) \mathcal{S}_{x'}(e^{i( \omega - \nu)}) \, d \nu;  \end{equation} moreover a discrete autoregressive process of order \(1\) has power spectral density \cite{Kay} \[\mathcal{S}_{\xi}(e^{i\nu}) = \frac{\sigma_\epsilon ^2}{ 1 - 2 \varphi \cos(\nu) + \varphi^2}, \ \nu \in [-\pi, \pi]. \]  The power spectral density of our continuous-time signal model is \( \mathcal{S}_x (f) = 1/(2 W) \chi_{\{ |f| \le W \} } \) which after sampling becomes \( \mathcal{S}_x (e^{i\omega}) = 1/(2 W T_s) \chi_{\{ |(\omega)_{2 \pi}| \le 2 \pi W T_s \} } \), where the operator \( ( \cdot )_{2 \pi}\) denotes wrapping into \( [-\pi, \pi] \). In addition, the continuous-time frequency response of an ideal differentiator is \( H_D( \Omega) = i \Omega \) (cf. example 4.5 in \cite{Oppenheim_Schafer}), which becomes \( H_D(e^{i \omega}) = i \omega / 
T_s  \) in our setting. Thus \[\mathcal{S}_{x'} (e^{i\omega}) = |H_D (e^{i \omega})|^2 \mathcal{S}_{x} (e^{i\omega}) = \frac{((\omega)_{2 \pi}) ^2}{2 W T_s ^3} \chi_{\{ |(\omega)_{2 \pi}| \le 2 \pi W T_s \}}  \] and by applying once more the Wiener-Khinchin theorem we can conclude that
\begin{equation} \begin{split}
\mathcal{R}_{D \mathbf{u}} [0] & = \text{var}((D \mathbf{u})_n) \\ &  = \frac{1}{2 \pi} \int_{-\pi}^\pi |H_D(e^{i \omega})|^2 \mathcal{S}_u (  e^{i \omega})\, d \omega \\ & = B \int_{-\pi}^{\pi} \int_{-\pi}^{\pi} \frac{ \omega^2 ((\omega - \nu)_{2 \pi})^2}{1 - 2 \varphi \cos(\nu) + \varphi^2} \chi_{\{|(\omega - \nu)_{2 \pi}| \le 2 \pi W T_s\} } \, d \omega \, d \nu
\end{split}
\end{equation}
with \(B = \sigma_\epsilon ^2 /(8 \pi^2 W T_s ^5)\) and \eqref{var_derEst} follows by recalling that \( \sigma_{x'}^2 = 4 \pi^2 W^2 / 3  \) and \( \sigma_{\xi}^2 = \sigma_\epsilon ^2 / (1 - \varphi ^2) \).

\end{proof}

\end{appendix}
\end{document}